\newenvironment{example}{\pushQED{\qed}\examplex}{\popQED\endexamplex}
\renewcommand{\>}{\succ}
\newcommand{\DA}{\texttt{DA}}
\newcommand{\NPH}{\textrm{\textup{NP-hard}}}
\newtheorem{lemma}{Lemma}
\newtheorem{corollary}{Corollary}
\Crefname{thm}{Theorem}{Theorems}
\Crefname{remark}{Remark}{Remarks}
\Crefname{examplex}{Example}{Examples}
\title{Fair Stable Matching Meets Correlated Preferences}
\author{
Angelina Brilliantova\\
Rochester Institute of Technology\\
Rochester, NY, USA\\
\texttt{lb9849@rit.edu}
\and
Hadi Hosseini\\
Pennsylvania State University\\
University Park, PA, USA\\
\texttt{hadi@psu.edu}
}
\date{}
\begin{document}

\maketitle


\begin{abstract}
The stable matching problem sets the economic foundation of several practical applications ranging from school choice and medical residency to ridesharing and refugee placement. It is concerned with finding a matching between two disjoint sets of agents wherein no pair of agents prefer each other to their matched partners. The \textit{Deferred Acceptance} (\DA{}) algorithm is an elegant procedure that guarantees a stable matching for any input; however, its outcome may be unfair as it always favors one side by returning a matching that is optimal for one side (say men) and pessimal for the other side (say women). A desirable fairness notion is minimizing the \textit{sex-equality cost}, i.e. the difference between the total rankings of both sides. Computing such stable matchings is a strongly NP-hard problem, which raises the question of what tractable algorithms to adopt in practice. We conduct a series of empirical evaluations on the properties of sex-equal stable matchings when preferences of agents on both sides are correlated. Our empirical results suggest that under correlated preferences, the \DA{} algorithm returns stable matchings with low sex-equality cost, which further confirms its broad use in many practical applications.
\end{abstract}





\section{Introduction}

Matching theory sets the economic foundation for achieving stable allocations through market design. It has shaped the cornerstone of many practical applications ranging from school choice~\citep{APR+05boston,APR05new} and medical residency \cite{R84evolution,roth1999redesign} to refugee placement \cite{delacretaz2019matching,jones2017refugee} and ridesharing \cite{hall2017labor,banerjee2019ride}. 
In its essence, the \textit{stable matching} problem deals with finding a matching between two disjoint sets of agents (colloquially men and women) according to their preferences.
%
The primary objective is to achieve stability, that is, finding a matching between the two sides wherein no pair of agents prefer each other to their matched partners.


Over the past few decades, numerous theoretical breakthroughs and developments were pioneered to study mathematical and axiomatic properties of the stable matching problem \cite{roth1982economics,teo2001gale,demange1987further,GS85some} as well as its computational and algorithmic aspects~\cite{vaish2017manipulating,HUV21accomplice,kato1993complexity,knuth1990stable,shen2018coalition}. 
However as problems become increasingly more challenging, there has been a need for moving from theory about simple markets to more complex settings that account for subtle, but vital, differences in constraints in preference structures or other factors that may pose computational or axiomatic challenges. Hence, as \citet{roth1999redesign} suggested \textit{``as game theory moves from simple conceptual problems to complex design problems, we will need to make more general use of this interaction among theory, computational investigation of market data, and theoretical computation, and that this in turn will produce new problems and directions for traditional theory''}. This doctrine motivated a large body of work in taking empirical or statistical approaches in exploring markets through studying (real or synthetic) data sets or analyzing statistical distributions \citep{ashlagi2017unbalanced,knoblauch2009marriage, boudreau2010marriage,tziavelis20}.

In this vein, we investigate the fairness of stable matchings through empirical simulations to paint a thorough picture of the structure of fair stable solutions in matching markets.

\subsubsection*{\textbf{A motivating insight}}
The Deferred Acceptance algorithm (\DA{})---due to \citet{gale1962college}---provides an elegant solution to the stable matching problem wherein agents from one side (say men) make proposals to the agents from the other side (say women). Each woman tentatively accepts her favorite proposal and rejects the rest.
Despite the popularity and success of the $\DA{}$ algorithm in many real-world matching markets~\citep{roth1999redesign,roth1982economics}, it always favors the proposing side to the receiving side, that is, the \DA{} algorithm always returns a stable matching that is \emph{men-optimal}~\cite{gale1962college} but \emph{women-pessimal}~\citep{mcvitie1971stable}. This unequal treatment of the sides raises critical questions about the \textit{fairness} of the $\DA{}$ algorithm, which may result in (extremely) unequal welfare between both sides.

One of the most prominent and well-studied fairness notions---proposed by \citet{gusfield1989stable}---is \emph{sex-equality} that aims at finding a matching that equalizes the \textit{welfare} of both sides by minimizing the difference between the total rankings of men and women in a stable matching. 
While sex-equal stable matchings always exist, computing one has shown to be strongly \NPH{}~\citep{kato1993complexity}. 
For any instance of the matching problem, there may be an exponential number of stable solutions~\citep{irving1986complexity} and these stable matchings form a \textit{distributive lattice}. Thus, one may hope to find a ``fair'' stable matching that equalizes the welfare of both sides.

On a closer scrutiny, however, we notice that a sex-equal stable matching is highly correlated with the structure of preference lists of each side of the market. This observation suggests that even though the number of stable matchings grows exponentially in general~\citep{irving1986complexity}, surprisingly a sex-equal solution lies at the extreme points of the stable lattice in certain settings. 

\begin{example}\label{example:sex-equal}
Consider the following instance with five men and five women and the following preference lists.
\begin{table}[h!]
\small
   \centering
  \begin{tabularx}{\linewidth}{XXXXXXXXXXXXXXX}
           $m_1\colon$& $w_2^\dagger$& $w_4^*$& $w_5$& $w_1$& $w_3$ &&
           $w_1\colon$&$m_4$& $m_2$& $m_1$& $m_5^*$& $m_3^\dagger$\\
            $m_2\colon$& $w_3^{\dagger}$& $w_2^*$& $w_4$& $w_1$& $w_5$ &&
            $w_2\colon$&$m_2^*$& $m_4$& $m_1^\dagger$& $m_5$& $m_3$\\
            $m_3\colon$&$w_1^\dagger$& $w_5^*$& $w_4$& $w_3$& $w_2$ &&
            $w_3\colon$&$m_4^{*}$& $m_2^{\dagger}$& $m_1$& $m_3$& $m_5$\\
            $m_4\colon$&$w_4^\dagger$& $w_2$& $w_3^*$& $w_1$& $w_5$ &&
            $w_4\colon$&$m_2$& $m_1^*$& $m_4^\dagger$& $m_5$& $m_3$\\
             $m_5\colon$&$w_2$& $w_3$& $w_5^\dagger$& $w_1^*$& $w_4$ &&
             $w_5\colon$&$m_1$&$m_4$& $m_2$& $m_3^*$& $m_5^\dagger$
 \end{tabularx}
\end{table}

  \begin{figure}[t]
    \centering
        \centering
        \begin{tikzpicture}[
                    every edge quotes/.append style={font=\LARGE},
                    every label/.append style={font = \Huge},
                    level 1/.style={sibling distance=20mm},
                    scale=0.4, 
                    every node/.append style={transform shape},
                    > = stealth, 
                    shorten > = 1pt, 
                    auto,
                    node distance = 3cm, 
                    semithick 
                ]
        
                \tikzstyle{every state}=[
                    font = \Huge,
                    draw = black,
                    thick,
                    fill = white,
                    minimum size = 15mm
                ]
                \node[state, label = above:{$(w_{2}, w_{3}, w_{1}, w_{4}, w_{5})$, Men-optimal}] (Mopt) {$\mu^{M}$};
                \node[state, label = above left:{$(w_{4},w_{3},w_{1},w_{2},w_{5})$}] (v1) [below left of=Mopt] {$\mu_{1}$};
                \node[state, label = above right:{$(w_{2},w_{3},w_{5},w_{4},w_{1})$}] (v2) [below right of=Mopt] {$\mu_{2}$};
                \node[state, label = above left:{$(w_{4},w_{2},w_{1},w_{3},w_{5})$}] (v3) [below left of=v1] {$\mu_{3}$};
                \node[state, label = below right:{$(w_{4},w_{3},w_{5},w_{2},w_{1})$}] (v4) [below left of=v2] {$\mu_{4}$};
    
                \node[state, label = below:{$(w_{4},w_{2},w_{5},w_{3},w_{1})$, Women-optimal (Sex-equal)}] (v5) [below  of=v4] {$\mu^{W}$};
    
                \path[->] (Mopt) edge[scale = 2.0] node {$\rho_{1}$} (v1);
                \path[->] (Mopt) edge[scale = 2.0] node {$\rho_{3}$} (v2);
                \path[->] (v1) edge[scale = 2.0] node {$\rho_{2}$} (v3);
                \path[->] (v1) edge[scale = 2.0] node {$\rho_{3}$} (v4);
                \path[->] (v4) edge[scale = 2.0] node {$\rho_{2}$} (v5);
                \path[->] (v2) edge[scale = 2.0] node {$\rho_{1}$} (v4);
				\path[->] (v3) edge[scale = 2.0] node {$\rho_{3}$} (v5);

        \end{tikzpicture}
         \caption{The stable lattice for the profile described in \cref{example:sex-equal} with six stable solutions.
         Matchings are denoted as lists of women matched to men according to their indices, rotations are indicated by $\rho_x$ on arcs.}
         \label{fig:hasse}
\end{figure}
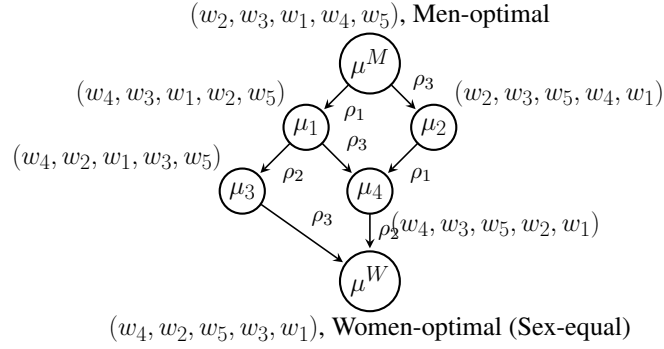

The men-optimal and women-optimal matchings are marked by $\dagger$ and $\ast$ respectively. The stable lattice of this preference profile contains five  matchings as illustrated in \Cref{fig:hasse}. 
A sex-equal stable matching assigns a total ranking of 12 to women and 13 to men, minimizing the welfare difference between the two sides. 
Notice that the sex-equal solution for this profile lies precisely on the extreme point of the stable lattice and is equivalent to the women-optimal solution.\footnote{We refer the readers to Appendix \ref{app:prelim} for details of lattices and rotations.}
\end{example}



Motivated by this observation, we study stable matching problems when both sides of the market have correlated preference lists. We focus on correlated random preference lists sampled from \emph{Mallows distribution models}~\cite{mallows1957non,marden1995analyzing}. The Mallows model is the cornerstone of a variety of ranking problems in machine learning~\citep{vitelli2017probabilistic,awasthi2014learning,lu2014effective} and social choice~\cite{irurozki2019mallows}, and has been shown to correctly capture the preference rankings of individuals in many practical applications~\citep{caragiannis2013noisy,steyvers2009wisdom}. 
Under the Mallows models, preferences are correlated through a reference ranking, where the probability of a preference list to be sampled is inversely proportional to its Kendall-tau distance~\citep{kendall1938new,kendall1948rank} (number of swaps between the two lists) from the reference ranking.
Therefore, focusing on settings with correlated preference lists on both sides of the market we ask the following questions:

\begin{quote}
    \emph{How does the correlation between preference lists of two sides impact the fairness of stable matchings? 
    What algorithmic solutions should we adopt in practice when preferences are correlated according to Mallows models?}
\end{quote}

\subsection{Our Results}
Focusing on markets with correlated preferences on both sides, we empirically investigate the sex-equality of stable matchings through a series of extensive experiments on synthetic data:


\begin{itemize}
    \item \textbf{Stable lattice}: We first focus on the size of the stable lattice when preferences of both sides are drawn from Mallows distributions. 
    We show that the size of the stable lattice is heavily dependent on the relationship between the correlation intensity in preferences of two sides (\Cref{sec:lattice}). In particular, we show that when correlations between the two sides of the market are \textit{symmetric}, that is, preferences are distributed according to the same dispersion parameter, the stable lattice grows rapidly--as it was theoretically proved by \citet{levy2017novel}.
    However, when correlations are \textit{asymmetric}, i.e. distributed according to different dispersion parameters, the size of the stable lattice sharply decreases.
    
    \item \textbf{Asymmetric correlations}: When preferences of the two sides are sampled from different Mallows distributions, we show that in overwhelming majority of cases, a sex-equal stable matching is located at the extreme points of the stable lattice. In \cref{sec:asymmetric} we discuss how this key observation immediately results in a polynomial-time algorithm for finding a sex-equal matching by computing a men-optimal or women-optimal stable matching through the \DA{} algorithm \citep{gale1962college}.
    
    \item \textbf{Symmetric correlations}: When preferences are sampled from the same Mallows distribution, even though the size of the stable lattice may be exponential, the cost of a sex-equal matching is considerably close to the cost of the \DA{} outcome. As the difference between the welfare of men and women is sufficiently small, the importance of which group proposes (men or women) becomes negligible (\cref{sec:symmetric}).
    
    \item \textbf{Comparing the performance of algorithms}:
    In \cref{sec:comparison} we conduct a series of empirical comparisons between well-studied heuristic and procedurally fair algorithms. 
    We show that with correlated preferences, the \DA{} algorithm---with a careful selection of the proposing side---performs as good as the best known local search algorithm with respect to the sex-equality cost even on large instances. This result further justifies the use of the \DA{} algorithm which is more computationally efficient compared to other methods.
\end{itemize}

\subsection{Related Work}

Stability is a key condition for the success and longevity of two-sided markets \cite{roth2002economist}. The seminal work by Gale and Shapley showed that a stable matching always exists and can be found in polynomial time using the Deferred Acceptance algorithm ($\DA{}$) \cite{gale1962college}. The important property of \DA{} is inherent asymmetry: favouring one side at the cost of another \cite{gusfield1989stable}. 
The inherent bias of \DA{} should be taken into account as many central clearinghouses use it as the basis for their matching procedures, including the National Resident Matching Program (NRMP) \cite{roth1999redesign} and the New York City school assignment \cite{abdulkadirouglu2005new}.

Several papers investigate the difference between the expected rank of partners in an optimal and a pessimal stable matching (aka \textit{the welfare gap}) \cite{ashlagi2017unbalanced, ashlagi2020tiered, rheingans2020large, pittel1989average}. In one-to-one balanced markets (i.e. with an equal number of men and women) and when preferences drawn uniformly at random, the expected total rank of agents asymptotically approaches $n\log n$ in their optimal matching and $n^2/\log n$ in their pessimal matching \citep{pittel1989average}. The same order of scores is preserved in markets with constant tier-based preferences, in which agents rank partners proportionally to their real-numbered popularity scores \citep{ashlagi2020tiered}. 

Large real-world markets usually admit an exceedingly small number of stable matchings. For example, for the NRMP hospital-resident matching and for Boston Public School student admission there were only a couple of stable matchings each year \cite{roth1999redesign,pathak2007leveling}. It was shown theoretically, that a lattice becomes essentially a singleton in large one-to-one markets, in which one side draws shorter fixed-sized preference lists from an arbitrary distribution \cite{immorlica2015incentives}. Similar results were demonstrated for many-to-one settings: balanced \cite{kojima2009incentives} and unbalanced \cite{ashlagi2017unbalanced}. At the same time, large stable lattices were theoretically proven in markets with correlated preferences induced by the Watts–Strogatz ``small world'' model (even when unbalanced) \cite{rheingans2020large}, and the Mallows model \cite{levy2017novel}. Empirically, large lattices are found in settings of matching-with-contracts \cite{hassidim2018need}.

\subsubsection*{\textbf{Fairness and stability.}}

The potential exponential size of stable solutions \cite{irving1986complexity} alongside the inherent bias of \DA{}, has motivated the discussion of fairness in stable matching markets. Various algorithms were proposed to ensure fairness towards individuals \cite{freemantwo, feder1992new}, socio-economic groups \cite{abdulkadirouglu2003school, huang2010classified, nguyen2019stable}, regions \cite{kamada2015efficient}, and sides (e.g. hospitals vs. residents) \cite{tziavelis2019equitable, mcdermid2014sex, romero2005equitable}. 
Below, we will briefly discuss a variety of fairness notions that have received attention in the literature of stable matchings. 

The \textit{egalitarian} stable matching, is a stable matching that minimizes the sum of rankings of all partners for men and women, hence maximizes the total welfare of all agents \citep{irving1987efficient}. The \textit{minimum regret} stable matching minimizes the highest rank of agents, therefore minimizing the rank of a partner for the most unsatisfied agent. Egalitarian and minimum regret stable matchings can be found in polynominal time \citep{irving1987efficient, feder1992new}, but do no ensure cross-sided fairness: they both can substantially favor one of the sides, by optimizing the welfare on the level of the whole system (egalitarian), and individual agents (minimum regret). In the \textit{median stable matching}, each agent is matched to its middle favorite partner across all partners from stable matchings; the resulting matching is a median element of a stable lattice. Finding a median stable matching is \NPH{} in general, but it could be done in polynomial time in some restricted families of the rotation poset \citep{cheng2010understanding}. The median stable matching might not satisfy cross-sided fairness if one side prefers their middle stable partners substantially more than the other side. A \textit{procedurally fair} algorithm aims to provide agents an equal probability to affect the resultant stable matching by issuing proposals \cite{tziavelis20}, reducing the set of stable matchings \citep{aldershof1999refined}, or satisfying blocking pairs \citep{roth1990random}. Procedural fairness does not necessarily result in cross-sided fair matchings, and thus, one side may receive much higher welfare compare to the other side \cite{klaus2006procedurally}. 

A \textit{sex-equal stable matching} minimizes the gap between the sum of partners' ranks of men and women. The sex-equal stable matching problem is an \NPH{} problem and fixed parameter tractable with respect to the treewidth of the Hasse diagram of the rotation poset \cite{gupta2017treewidth} and $k$ parameter in the $k$-range model \cite{cheng2021stable}. To tackle the sex-equal stable matching problem various heuristics have been proposed: for example, performing local search series on a stable lattice (iBILS) \citep{viet2020shortlist}, and transforming a stable matching using genetic algorithm \citep{nakamura1995genetic}. While some of these approximate algorithms (e.g. iBILS) perform well in  experiments, they bear no theoretical bounds on the quality of an outcome.

\subsubsection*{\textbf{Correlated and random preferences.}} Several recent works in stable matching have focused on investigating a variety of natural structures of profiles, from profiles generated uniformly at random \cite{ashlagi2017unbalanced, pittel1989average} to random profiles with soft and hard constraints on rankings \cite{knoblauch2009marriage, tziavelis20, ashlagi2020tiered}. The Mallows distribution model drew the attention of researchers as it was shown to realistically capture the preference of agents in several applications involving individual decision-makers \cite{steyvers2009wisdom}. In the Mallows model, preferences are correlated through a reference ranking, representing the objective order of agents' attractiveness. In contrast to the tiered and random utility models, the Mallows model generates a variety of correlated preferences using only a few parameters: two dispersion parameters and two reference rankings; one of each for each side of the market.

\section{Preliminaries}
We start by providing a formal representation of the model and define the necessary properties. 

\paragraph{\textbf{Problem setup.}} An instance of the \emph{stable matching problem} is specified by the tuple $I = \langle M, W, \> \rangle$, where $M$ is a set of $n$ men, $W$ is a set of $n$ women, and $\>$ is a \textit{preference profile} which consists of the preference lists of all men and women. The preference list of any man $m \in M$, denoted by $\>_m$, is a strict total order over all women in $W$ (for any woman $w \in W$, the list $\>_w$ is defined analogously).
The rank of woman $w$ in man $m$'s preference list, $\>_m$, is denoted by $r(w,m)$. For instance, given $\>_m = (w_{2}, w_1, w_3)$, we say $w_2$ is ranked first in $m$'s preference list, i.e., $r(w_2, m) = 1$. Similarly, $r(m,w)$ denotes the rank of $m$ in $\>_w$. 


\paragraph{\textbf{Stable matchings.}} A perfect \emph{matching} is a mapping~$\mu: M \cup W \rightarrow M \cup W$ such that $\mu(m) \in W$ for all $m \in M$, $\mu(w) \in M$ for all $w \in W$, and $\mu(m) = w$ if and only if $\mu(w) = m$. 
Given a matching $\mu$, a man-woman pair $(m, w)$ is called a \textbf{blocking pair}, with respect to the preference profile $\>$, if they prefer each other to their assigned partners under $\mu$, i.e., $w \>_m \mu(m)$ and $m \>_w \mu(w)$.
A matching is \textbf{stable} if it contains no blocking pairs. 

Given a preference profile $\>$, the set of all corresponding stable matchings, $S_{\>}$, forms a \emph{distributive lattice}.\footnote{See \cref{app:prelim} for a detailed explanation of the distributive lattice and an example.} The maximum and minimum points of a stable lattice correspond to the men-optimal ($\mu^{M})$ and the women-optimal ($\mu^{W}$) matchings respectively \citep{gusfield1989stable}, i.e. a matching where all men (respectively women) receive their best stable partners. A men-optimal matching is simultaneously women-pessimal: it matches all woman to their worst stable partners \cite{mcvitie1971stable}.
\Cref{fig:hasse} illustrates a lattice consisting of six stable matchings.
The expected size of a stable lattice (correspondingly the size of $S_\>$) is asymptotic to $e^{-1}\ln n$ when preferences are drawn uniformly at random \citep{pittel1989average} and may grow exponentially with the number of agents \citep{irving1986complexity}.

\paragraph{\textbf{Welfare measure.}}

Let $S_M(\mu)$ denote the sum of ranks of men's partners in matching $\mu$, that is, $S_M(\mu) = \sum_{m \in M} r(\mu(m),m)$. Similarly for women, we let $S_W(\mu)$ be the sum of ranks of women's partners given matching $\mu$, i.e. $S_W(\mu)= \sum_{w \in W} r(\mu(w),w)$. 
Note that the smaller values indicate higher social welfare. 
Thus, a men-optimal matching $\mu^{M}$ is the one that minimizes the sum of rankings for men, that is, $\mu^{M} = \arg\min_{\mu\in S_{\>}} S_M(\mu)$. And similarly, for the women-optimal matching we have $\mu^{W} = \arg\min_{\mu\in S_{\>}} S_W(\mu)$. For simplicity, we will refer to the scores of a men-optimal matching $S_M(\mu^M)$, $S_W(\mu^M)$ as \textbf{$S_M$-optimal} and \textbf{$S_W$-pessimal} scores. Similarly, for a women-optimal matching we write \textbf{$S_M$-pessimal} and \textbf{$S_W$-optimal} to indicate $S_M(\mu^W)$ and $S_W(\mu^W)$.

\paragraph{\textbf{The Deferred Acceptance algorithm.}} Given a preference profile $\>$, the Deferred Acceptance (\DA{}) algorithm, proposed by \citet{gale1962college}, consists of rounds of \emph{proposal} and \emph{rejection} phases and proceeds as follows: 
In each round every man who is currently unmatched proposes to his favorite woman from among those who have not rejected him yet. Each woman tentatively accepts her favorite proposal and rejects the rest. The algorithm terminates in $\mathcal{O}(n^{2})$ when no further proposals can be made.

Given any profile $\>$ as input, the \DA{} algorithm is guaranteed to return a stable matching \cite{gale1962college}. Moreover, the \DA{} algorithm is simultaneously \emph{men-optimal} \cite{gale1962college} and \emph{women-pessimal} \cite{mcvitie1971stable}, i.e. men receive their favorite stable partners among all matchings available to them in $S_{\>}$, and women receive their least favorite stable partners.
The \DA{} algorithm returns the stable matchings at the extreme points of the stable lattice depending on which set (men or women) are proposers. Thus, the \DA{} algorithm with women proposing is respectively women-optimal and men-pessimal.

\subsection{Fair Stable Matchings}

Given a matching $\mu$, the \textit{sex-equality cost} of $\mu$ is the absolute difference between the total welfare of men and women, i.e. $S_M$ and $S_W$ scores. Formally, the sex-equality cost of a matching $\mu$ is defined as
\begin{equation}\label{se}
c(\mu) = | S_M(\mu) - S_W(\mu))|.
\end{equation}

Given preference profile $\>$, a \emph{sex-equal stable matching} is a matching in $S_\>$ that minimizes the sex-equality cost across the stable lattice, that is,
\begin{equation}\label{se_min}
    \mu^* \leftarrow \arg\min_{\mu\in S_{\succ}} (c(\mu)).
\end{equation}

Intuitively, in a sex-equal stable matching the total rank of men's partners is as close as possible to that of women (subject to the stability condition).
\citet{kato1993complexity} showed that for an arbitrary preference profile the problem of finding a sex-equal stable matching is strongly \NPH{}.  However,  it is fixed parameter tractable with respect to the range of the profile -- a metric showing the maximum discrepancy between an agent's worst and best rank in the preference profile \cite{cheng2021stable}. Also, it can be found efficiently when agents have a specific two-dimensional single-peaked model \cite{salonen2018mutually} or if preferences are identical on one side \cite{irving2008stable}.

\subsection{Correlated Preferences}

There are several plausible ways to study preference models generated from uniform distributions~\cite{pittel1989average,ashlagi2017unbalanced} or correlated preferences ~\cite{ashlagi2020tiered,beyhaghi2017effect,tziavelis20, rheingans2020large}. The vast majority of these works focused on profiles wherein the preferences of one side are correlated while the other side is considered uniform. We focus on a more \textit{general} distribution models where \textit{both} sides of the market have correlated preferences through the Mallows model.

\paragraph{\textbf{The Mallows model}} A Mallows distribution is a distance-based probabilistic model for permutations correlated with some common reference~\cite{mallows1957non}. It is parameterized by a reference ranking, $\hat{\pi}$, and a dispersion parameter $\phi\in (0, 1]$.
Let $\pi$ be a permutation of a preference list. For any permutation $\pi$, the Mallows model specifies a probability as follows:
\[
p(\pi|\hat{\pi}, \phi) = \frac{1}{Z} \phi^{\tau(\pi,\hat{\pi})}
\]
where $\tau(\pi, \hat{\pi})$ is a Kendall-tau distance (the number of pairwise inversions) between $\pi$ and $\hat{\pi}$ and $Z$ is a normalization constant with $Z = 1(1+\phi)(1+\phi+\phi^{2})\ldots(1+\phi+\ldots+\phi^{n-1})$.

Note that the dispersion parameter $\phi$ indicates the `intensity' of correlation between the sampled preferences. When $\phi=0$, the correlation is maximal, that is, the distribution mass is entirely on the reference ranking and all preference lists are identical. When $\phi = 1$, the correlation is minimal, the probability mass is distributed evenly between all possible permutations and the Mallows model is equivalent to the \textbf{Uniform distribution model} (also known as \emph{Impartial Culture} \cite{black1958theory} in the computational social choice literature).

For each set involved in the stable matching problem, we consider an independent probabilistic  preference model: one Mallows model for the set of men parameterized by $\hat{\pi}_{m}$ and $\phi_{m}$ and another preference model for women specified by $\hat{\pi}_{w}$ and $\phi_{w}$.
In this way, the preferences of men (similarly women) are globally \textit{correlated} with $\hat{\pi}_{m}$ ($\hat{\pi}_{w}$).\footnote{Throughout the paper, we assume $\hat{\pi}_{m} = \hat{\pi}_{w}$, because one can simply relabel the preferences on one side to create any arbitrary distance.}


\paragraph{\textbf{Symmetric and asymmetric models.}}
We let $\phi_{\Delta} = |\phi_m-\phi_w|$ to represent the \textit{correlation disparity} in preferences of men and women sampled for this particular instance. Simply put, correlation disparity reflects how similar the probabilistic preferences of men are \textit{in comparison} to the similarity of women's preferences. 
We call stable matching instances simulated from the Mallows model with zero correlation disparity, $\phi_{\Delta}=0$,  \textit{symmetric correlation markets}, and \textit{asymmetric correlation markets} when simulated with $\phi_{\Delta}\neq 0$.

\section{Stable Lattice under the Mallows model} \label{sec:lattice}

In this section, we show empirically that in  one-to-one markets with correlated preferences induced by Mallows models the size of the stable lattice depends on the correlation disparity.
The empirical investigations give insights on how searching for a sex-equal matching can computationally vary with the size of a stable lattice in a given market.\footnote{The source code is available at \url{https://github.com/Restel/mallows-smp}}

\subsection{Setup and Preference Sampling }

We generate $1,000$ instances of a stable matching problem for each $n \in [10,150]$ and $\phi_m, \phi_w$ in $[0.1, 1.0]$. When generating a preference profile under probabilistic models, we draw a preference list for each agent \textit{i.i.d} from the Mallows distribution using the sampler from PrefLib library \cite{MaWa13a}. 


In all sampled stable matching instances we built the stable lattice, and measured $S_M$, $S_W$ welfare scores. For each experimental setting, we estimate a target statistic (the median for numerical variables like size and mean for binary variables) over $1,000$ instances along with the confidence intervals. Confidence intervals were obtained using \textit{bootstrap sampling with replacement} with sample size $1,000$ with $100$ repeats. We report the results only for $n = 150$, as for the smaller $n$ the results are qualitatively the same. 
We occasionally considered larger instances with $n=300$, when comparing the size of the lattice of the Uniform and the Mallows ($\phi_m=\phi_w=0.5$) models.\footnote{We use the same instance generation and setup in all remaining sections.}

\subsection{The Size of the Stable Lattice}

\subsubsection*{\textbf{The size of the stable lattice depends on the correlation disparity}}
Our empirical investigation illustrates an intriguing observation about the size of the stable lattice: 
Under the Mallows model the size of the stable lattice is affected by the correlation intensity in men's and women's preferences (measured by the dispersion parameters $\phi_m$, $\phi_w$) as well as the discrepancy between them, i.e. the correlation disparity $\phi_{\Delta}$. 
The size of the lattice is maximum when the preferences of men and women are sampled from a distribution with identical dispersion parameters, $\phi_{\Delta} = 0$, and decreases with the rise of the correlation disparity. This relationship is confirmed by median values, 25\%-75\% percentiles, 95\% confidence intervals, maximum values and holds for all combinations of dispersion parameters (\cref{fig:se_size}). These findings are aligned with theoretical results indicating that when the preferences are sampled from the same distribution (parametrized by identical dispersion parameters, $\phi_m = \phi_w$), the size of the stable lattice grows exponentially in the number of agents \cite{levy2017novel}, making the use of an exhaustive search algorithm impractical in these settings.

More importantly, they reveal that the primary predictor of the lattice size is the correlation disparity between the two sides, and the intensity of the correlations (dispersion parameters) have less significant impact on the lattice size.\footnote{Similar results hold for rotation posets as we discuss in Appendix~\ref{app:rotation}.}

\begin{table}[t]
\centering \small
\begin{tabular}{rllrlr}
  \hline
 & n & Type & Median $\mathcal{L}$ & CI($\mathcal{L}$) & max($\mathcal{L}$)  \\ 
  \hline
1 & 150 & Mallows & 16.00 & (2;256) & 1024  \\ 
  3 & 150 & Uniform & 82.00 & (32;231) & 626 \\ 
  4 & 300 & Mallows & 144.00 & (8;7008) & 147456  \\ 
  5 & 300 & Uniform & 209.00 & (91;567) & 1262  \\ 
   \hline
\end{tabular}
\caption{The size of the stable lattice ($\mathcal{L}$) under the Uniform, and the Mallows distributions with $\phi_m=\phi_w=0.5$ models. $CI(\mathcal{L})$ denotes the 95\% confidence interval.
}
\label{table:lat_size_uni_mal_pol}
\end{table}




\subsubsection*{\textbf{Mallows preferences induce large lattices}}
It is thought that introducing correlation in preferences reduces the stable lattice size compared to the random uncorrelated case \cite{roth1999redesign, ashlagi2017unbalanced, lee2016incentive}. In their empirical work, \citet{roth1999redesign} conjectured that having a common objective ranking for the agents' preferences is one of the main reason behind the observed small lattice sizes. 
Our empirical results suggest that while in general the median size of the lattice was indeed smaller in the Mallows compared to the Uniform distribution, some $\phi$ parameters have much larger maximum values and confidence intervals.\footnote{See \Cref{table:summary_phi} in \cref{app:rotation} for details.} 
For example, in case of asymmetric correlation with $\phi_m=0.5$, $\phi_w = 0.3$, $n=150$, the maximum size of the lattice was larger than that of the Uniform model. Similarly, for symmetric case with $\phi_m=0.5$, $\phi_w = 0.5$, and $n=300$, 95\% confidence interval of a stable lattice size spans from $8$ to $7008$ matchings compared to $91$--$567$ interval under the Uniform distribution (see \cref{table:lat_size_uni_mal_pol}).

\begin{figure}[t]
    \centering
    \includegraphics[width = \columnwidth]{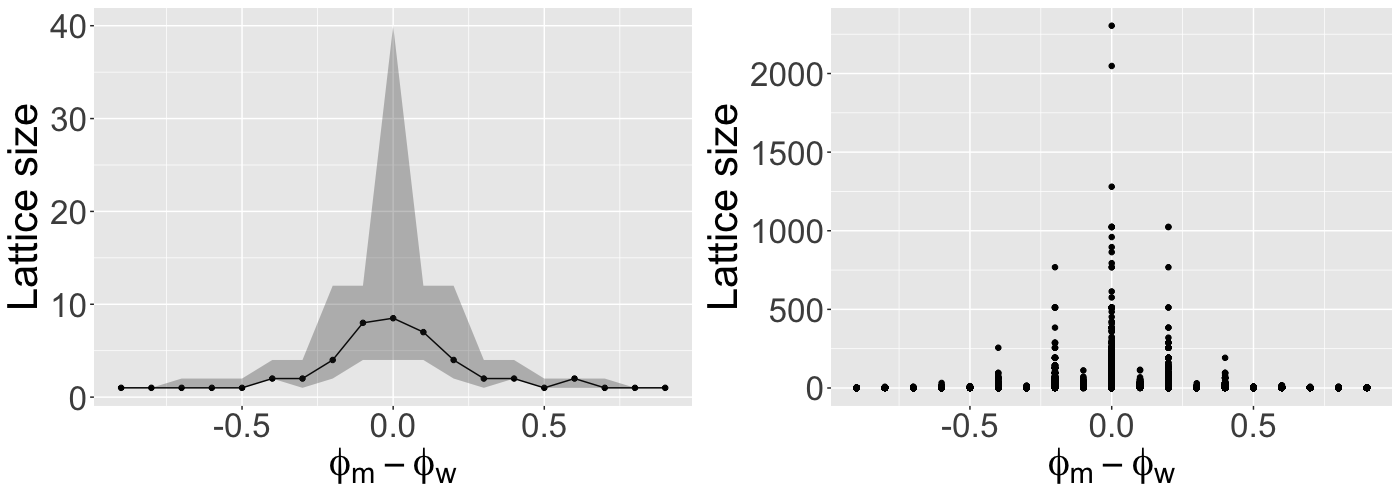}
    \caption{The size of the stable lattice with respect to $\phi_m - \phi_w$ in the Mallows model for $n= 150$: median values (left), maximum values (right). Dotted lines denote the medians, shadowed gray area denotes 1st and 3rd quartiles. Y-axes of left and right figures have different scales.}
    \label{fig:se_size}
\end{figure}

Our results suggest that symmetrically correlated markets have extraordinarily large lattices compared to the Uniform case ($147456$ vs. $1262$), aligning well with the results of \citet{levy2017novel} on asymptotic exponential lattices in the Mallow model. Note, that such large instances appeared for such small number of agents as $300$, while real-world markets often have many more agents \cite{roth1999redesign, pathak2007leveling}. 



In the past, correlation in preferences of agents was considered a sole factor contributing to small lattice sizes \cite{roth1999redesign, ashlagi2017unbalanced}. Our experiments imply that in case of the correlation induced by the Mallows model, this effect depends not on the presence of correlation itself, but rather on its disparity between the two sides i.e. \textit{how strongly the preferences of one set are correlated compared to the preferences of the other set}. Symmetric correlation markets seem to be the worst case scenario, and the lattice tends to grow smaller when the sets are having preferences correlated dissimilarly. This raises the question of whether real-world markets with observed small lattices suffer from such correlation disparity in preference profiles or it should be attributed to other factors. 


\begin{figure*}
\begin{minipage}{0.50\linewidth}
    \centering
    \includegraphics[width = \columnwidth]{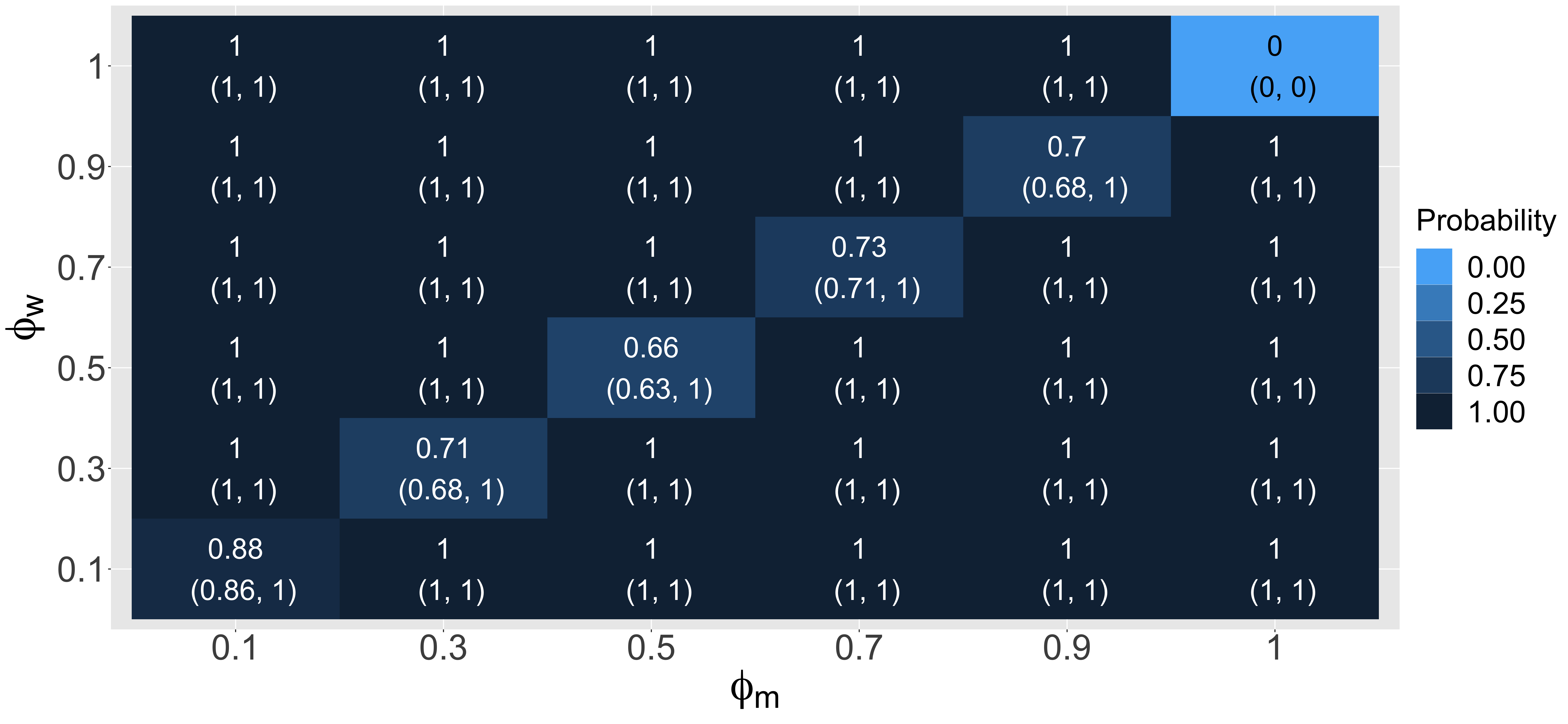}
    \caption{The fraction of instances where a sex-equal solution is an outcome of \DA{} algorithm for  $n = 150$: women proposing if $\phi_m < \phi_w$, men-proposing if $\phi_m > \phi_w$, any side proposing if $\phi_m=\phi_w$. Confidence intervals are given in parentheses.}
    \label{fig:heat_prob}
    \end{minipage}
    \hfill
\begin{minipage}{0.47\linewidth}
    \centering
    \includegraphics[width = \columnwidth]{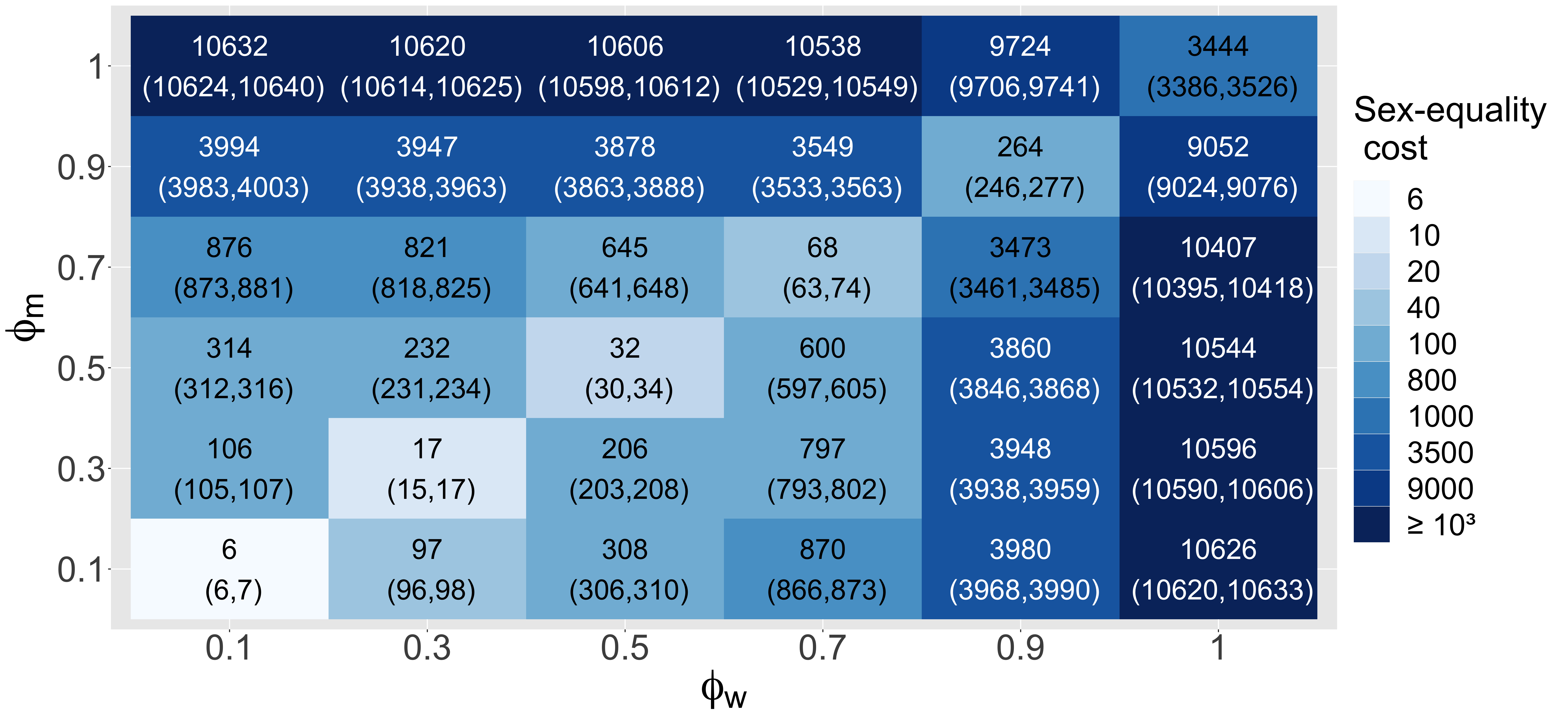}
    \caption{The median sex-equality cost of a \DA{} solution with respect to $\phi_m$, $\phi_w$, for $n=150$. Confidence intervals are given in parentheses.}
    \label{fig:cost_se}
    \end{minipage}
\end{figure*}

\section{Sex-Equal Stable Matchings}\label{sec:sex-equal}

In this section, we discuss how the position of a sex-equal matching within a stable lattice is determined by the relationship between men and women welfare scores, which enables us to find a sex-equal matching using the \DA{} algorithm.

The relationship between $S_M$ and $S_W$ of men- and women-optimal stable matchings dictates the location of a sex-equal solution within the lattice. In particular, it determines whether or not it lies on the extreme points of the lattice. Based on this relationship, we identify two ``easy'' cases, in which \DA{} guarantees to return a sex-equal stable matching, and a ``hard'' case, in which it does not.

\begin{lemma}[\citet{kato1993complexity}]\label{lem:Kato}
A sex-equal stable matching is: \\ 
1) the men-optimal stable matching, $\mu^M$, if $S_M(\mu^M) \geq S_W(\mu^M)$ \\
2) the women-optimal matching, $\mu^W$, if $S_M(\mu^W) \leq S_W(\mu^W)$ \\
3) the stable matching $\mu_Z$ such that $c(\mu_Z) =  \arg\min_{\mu\in S_{\succ}} (c(\mu))$, otherwise.
\end{lemma}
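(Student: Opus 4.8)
The plan is to exploit the structure of the stable lattice together with the monotonicity of the welfare functions $S_M$ and $S_W$ along its chains. Recall that $S_\succ$ is a distributive lattice whose maximum element is $\mu^M$ (men-optimal, women-pessimal) and whose minimum element is $\mu^W$ (women-optimal, men-pessimal). A standard fact I would invoke (or reprove in a sentence via rotations) is that as one moves down the lattice from $\mu^M$ toward $\mu^W$ along any maximal chain, $S_M$ is nondecreasing and $S_W$ is nonincreasing: eliminating a rotation makes every man involved weakly worse off and every woman involved weakly better off. Consequently the difference $D(\mu) := S_M(\mu) - S_W(\mu)$ is nondecreasing along every descending chain; in particular $D(\mu^M)$ is the minimum value of $D$ over the whole lattice and $D(\mu^W)$ is the maximum value, so $D(\mu^M) \le D(\mu) \le D(\mu^W)$ for all $\mu \in S_\succ$.

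With that in hand the three cases fall out. In case (1), the hypothesis $S_M(\mu^M) \ge S_W(\mu^M)$ says $D(\mu^M) \ge 0$. Since $D$ attains its minimum at $\mu^M$, we have $D(\mu) \ge D(\mu^M) \ge 0$ for every stable $\mu$, hence $c(\mu) = |D(\mu)| = D(\mu) \ge D(\mu^M) = c(\mu^M)$, so $\mu^M$ minimizes the sex-equality cost and is sex-equal. Case (2) is symmetric: $S_M(\mu^W) \le S_W(\mu^W)$ means $D(\mu^W) \le 0$, and since $D$ attains its maximum at $\mu^W$, every stable $\mu$ has $D(\mu) \le D(\mu^W) \le 0$, so $c(\mu) = -D(\mu) \ge -D(\mu^W) = c(\mu^W)$ and $\mu^W$ is sex-equal. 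Case (3) is the residual: the only remaining possibility, given that (1) and (2) both fail, is $D(\mu^M) < 0 < D(\mu^W)$ (note $D(\mu^M) \le D(\mu^W)$ always holds, so these two strict inequalities are consistent), and here no extremal point need be optimal; the statement simply records that a sex-equal matching is whatever $\mu_Z \in S_\succ$ minimizes $c$, which exists because $S_\succ$ is finite and nonempty.

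The main obstacle, and the only nontrivial ingredient, is the monotonicity claim that $S_M$ is nondecreasing and $S_W$ is nonincreasing along descending chains of the lattice — equivalently that $D$ is monotone. I would establish this through the theory of rotations (deferred to the appendix the paper references): moving from a stable matching to the one obtained by eliminating an exposed rotation replaces, for each pair $(m,w)$ on the rotation, $w$ by a woman $m$ ranks strictly lower and $m$ by a man $w$ ranks strictly higher, so $S_M$ strictly increases and $S_W$ strictly decreases on exactly the agents touched by the rotation and is unchanged elsewhere. Since $\mu^M$ and $\mu^W$ are the top and bottom of the lattice and every stable matching is reachable from $\mu^M$ by a sequence of rotation eliminations, the desired global bounds $D(\mu^M) = \min_\mu D(\mu)$ and $D(\mu^W) = \max_\mu D(\mu)$ follow. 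Everything else is the elementary case analysis on the sign of $D$ at the two extreme points carried out above.
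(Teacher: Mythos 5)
Your proof is correct. The paper itself gives no proof of this lemma --- it is imported verbatim from \citet{kato1993complexity} --- so there is nothing to match against, but your reconstruction is the standard argument and it is sound: once you know that $D(\mu) = S_M(\mu) - S_W(\mu)$ attains its minimum at $\mu^M$ and its maximum at $\mu^W$, the sign analysis in cases (1) and (2) and the residual case (3) are exactly as you write them. One simplification worth noting: you do not need the rotation machinery or monotonicity along chains to get the extremal bounds on $D$. They follow in one line from facts the paper already states in the preliminaries: $\mu^M$ is men-optimal, so $r(\mu^M(m),m) \le r(\mu(m),m)$ for every man $m$ and every stable $\mu$, giving $S_M(\mu^M) \le S_M(\mu)$; and $\mu^M$ is women-pessimal, so $S_W(\mu^M) \ge S_W(\mu)$; subtracting yields $D(\mu^M) \le D(\mu)$ for all stable $\mu$, and symmetrically $D(\mu^W) \ge D(\mu)$. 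The rotation argument proves the stronger statement that $D$ is monotone along every descending chain, which is true but more than the lemma requires. A cosmetic point: the lemma's case (3) as printed conflates $c(\mu_Z)$ with $\arg\min_{\mu} c(\mu)$; your reading of it (that $\mu_Z$ is any minimizer of $c$ over the finite nonempty set $S_\succ$) is the intended one.
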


\begin{corollary}\label{cor:sex-equal}
A sex-equal stable matching can be computed in polynomial time when either 
$S_M(\mu^{M}) \geq S_{W}(\mu^{M})$ or $S_M(\mu^{W}) \leq S_{W}(\mu^{W})$.
\end{corollary}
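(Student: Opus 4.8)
The plan is to derive \Cref{cor:sex-equal} as an immediate consequence of \Cref{lem:Kato} together with the fact that both the men-optimal and women-optimal stable matchings are computable in polynomial time via the \DA{} algorithm \citep{gale1962college}. The key observation is that the two hypotheses in the corollary are exactly the premises of cases (1) and (2) of \Cref{lem:Kato}, and in each of those cases the lemma identifies a sex-equal stable matching as an \emph{extreme point} of the stable lattice --- precisely the matching that \DA{} returns.

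First I would note that by the results recalled in the Preliminaries, running the \DA{} algorithm with men proposing produces the men-optimal stable matching $\mu^M$ in $\mathcal{O}(n^2)$ time, and running it with women proposing produces the women-optimal stable matching $\mu^W$ in $\mathcal{O}(n^2)$ time. Both of these are polynomial-time computations, and from either output we can read off the welfare scores $S_M(\mu^M), S_W(\mu^M)$ and $S_M(\mu^W), S_W(\mu^W)$ in linear time. Hence the quantities appearing in the hypotheses of the corollary are themselves efficiently computable.

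Next, suppose the first hypothesis holds, i.e. $S_M(\mu^M) \geq S_W(\mu^M)$. Then case (1) of \Cref{lem:Kato} applies verbatim, so $\mu^M$ is a sex-equal stable matching, and we have already obtained it in polynomial time. Symmetrically, if the second hypothesis holds, i.e. $S_M(\mu^W) \leq S_W(\mu^W)$, then case (2) of \Cref{lem:Kato} applies and $\mu^W$ --- again computed in polynomial time --- is a sex-equal stable matching. In either case, a sex-equal stable matching has been produced by a polynomial-time procedure (indeed, one can simply compute both $\mu^M$ and $\mu^W$, check the two inequalities, and output whichever extreme point the lemma certifies). This establishes the claim.

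There is essentially no hard step here: the corollary is a straightforward packaging of \Cref{lem:Kato} and the algorithmic guarantees of \DA{}. The only point worth being careful about is that the two hypotheses need not be mutually exclusive and need not be exhaustive --- case (3) of the lemma covers the remaining ``hard'' regime --- but the corollary only asserts tractability under the stated conditions, so no case analysis beyond invoking (1) and (2) is required. If one wanted to be fully explicit, the procedure is: compute $\mu^M$ and $\mu^W$ by \DA{}; if $S_M(\mu^M)\geq S_W(\mu^M)$ return $\mu^M$; else if $S_M(\mu^W)\leq S_W(\mu^W)$ return $\mu^W$; this runs in $\mathcal{O}(n^2)$ time and is correct by \Cref{lem:Kato}.
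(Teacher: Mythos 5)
Your proof is correct and follows essentially the same route as the paper's: run men-proposing and women-proposing \DA{} to obtain $\mu^M$ and $\mu^W$ in $\mathcal{O}(n^2)$ time, compute the welfare scores, and invoke Cases 1 and 2 of \Cref{lem:Kato} to certify the appropriate extreme point as sex-equal. No gaps; your added remark that the two hypotheses are neither exclusive nor exhaustive is a harmless clarification the paper leaves implicit.
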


\begin{proof}
Find a men-optimal matching $\mu^M$ by running a men-proposing \DA{} algorithm. Compute $S_M(\mu^M)$ and $S_W(\mu^M)$ and if $S_M(\mu^M) \geq S_W(\mu^M)$, than by \cref{lem:Kato}, Case 1 the men-optimal matching is a sex-equal stable matching. Analogously, find a women-optimal matching $\mu^W$ by running a women-proposing \DA{} algorithm. Compute $S_M(\mu^W)$ and $S_W(\mu^W)$ and if $S_M(\mu^W) \leq S_W(\mu^W)$, than by \cref{lem:Kato}, Case 2 the women-optimal matching is a sex-equal stable solution. The running time of \DA{} and calculating $S_M$, $S_W$ scores is $\mathcal{O}(n^2)$. Thus, a sex-equal stable matching can be found in polynomial time when $S_M(\mu^{M}) \geq S_{W}(\mu^{M})$ or $S_M(\mu^{W}) \leq S_{W}(\mu^{W})$.
\end{proof}

In the first and the second cases of \cref{lem:Kato}, the sign of the difference between the scores $S_M(\mu)-S_W(\mu)$ is preserved across the lattice, in other words $S_M(\mu)$ is either always greater or equal than $S_W(\mu)$ (Case 1), or smaller or equal than $S_W(\mu)$ (Case 2). The third case corresponds to the instances, in which the sign of $S_M(\mu)-S_W(\mu)$ changes across the lattice, and the location of the sex-equal matching within the lattice is not known a priori.

\subsection{Asymmetric Markets ($\phi_m \neq \phi_w$)} \label{sec:asymmetric}

When preferences of men and women have different levels of correlation intensity with the reference rankings, a sex-equal stable matching is often found on an extreme point of the stable lattice: men-optimal in case men have higher correlation and women-optimal otherwise. Empirically, we observe this behavior in all instances, including the largest stable lattices with up to $1024$ matchings (\cref{fig:heat_prob}). We found that agents from the side with a smaller dispersion parameter $\phi$ of the Mallows model, are less satisfied with their optimal matching than their partners with their pessimal one (and the market belongs to Case 1 or Case 2 scenario from \cref{lem:Kato}). This fact might imply that, in asymmetric correlation markets,  ``hard'' cases rarely occur and the \DA{} algorithm can be considered a fair alternative to the exhaustive search, especially since such markets can induce large lattices as shown in \cref{sec:lattice}.


Experimentally, in every stable matching instance belonging to an asymmetric correlation market with $\phi_m < \phi_w$, $S_M$-optimal was larger than $S_W$-pessimal, and therefore a men-optimal matching is sex-equal (\Cref{lem:Kato}, Case 1). Similarly, whenever $\phi_m > \phi_w$, $S_W$-optimal was larger than $S_M$-pessimal, rendering the women-optimal matching sex-equal (\Cref{lem:Kato}, Case 2). This is well illustrated by the distribution of $S_M$ and $S_W$ scores of men- and women-optimal matchings for various combinations of $\phi_m$, $\phi_w$.  \Cref{fig:M_W_0.9} illustrates this relationship when men sample their preferences with $\phi_m = 0.9$, and women have their dispersion parameter in range $[0.1, 1.0]$ (the graphs for other values of $\phi_m$ are analogous). In the part of the graph, where women's preferences have a lower dispersion parameter $\phi_w<\phi_m$, the distribution of $S_W$-scores of a women-optimal matching lies strictly above the distribution of its $S_M$-scores with no overlap.\footnote{A more fine-grained analysis with $\phi_w \in \{0.9, 0.91, 0.92, \ldots, 0.99\}$ confirms the same finding (see \cref{app:fine-grained}).}

\begin{figure}[h!]
    \centering
     \includegraphics[width= .6\columnwidth]{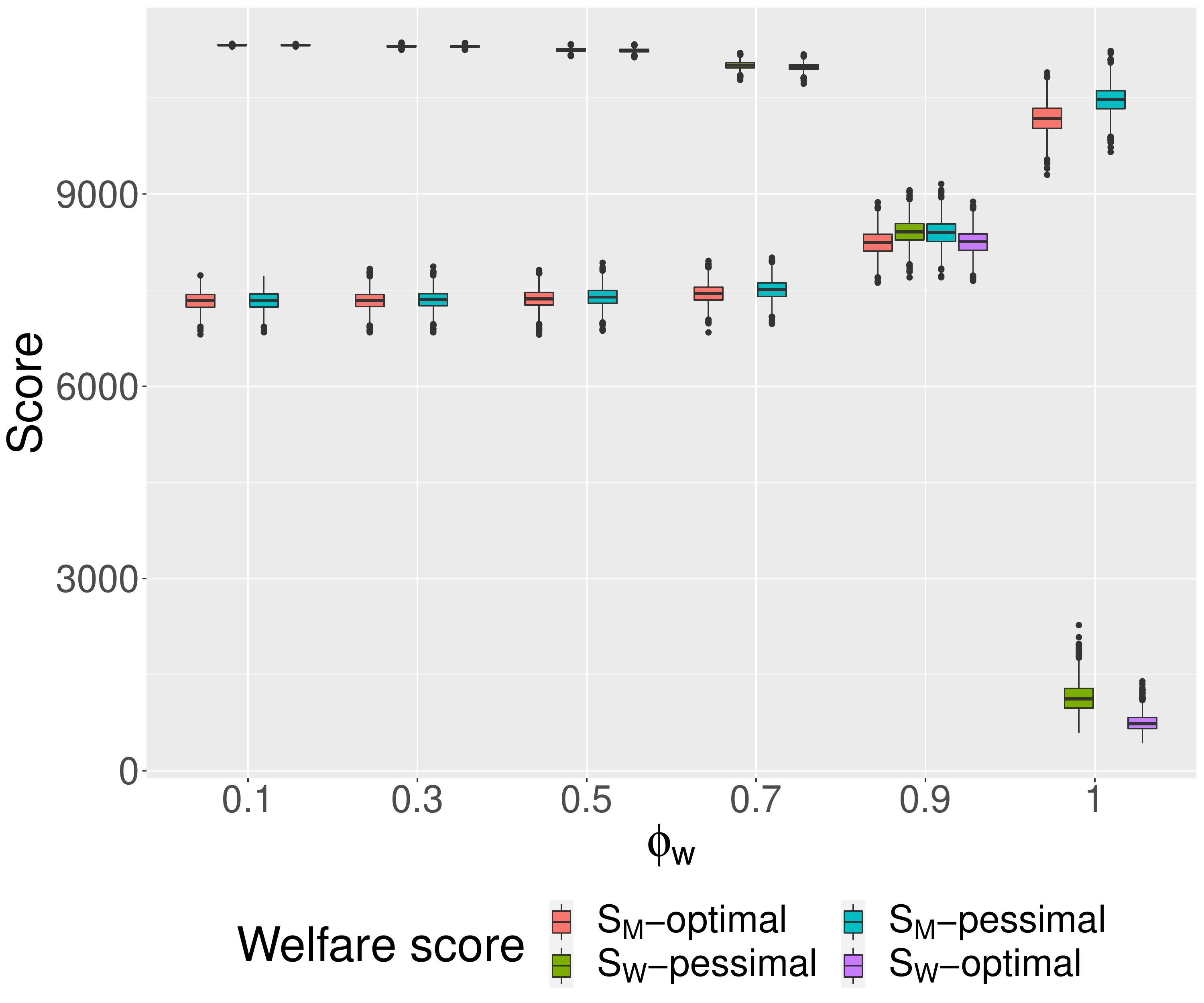}
    \caption{Illustrating $S_M$, $S_W$ scores under the Mallows model with $\phi_m = 0.9$ and varying $\phi_w$ based on 1000 samples.
    }
    \label{fig:M_W_0.9}
\end{figure}

\subsubsection*{\textbf{Implications on the \DA{} algorithm.}}
Interestingly, increasing the variability of women's preferences, results in 1) higher welfare for women in optimal and pessimal matchings and 2) a higher number of proposals from men. The total number of proposals in the \DA{} algorithm corresponds to the optimal score of the proposing side \cite{pittel1989average}. In \cref{fig:M_W_0.9} we can see a noticeable increase in the $S_M$-scores with the increase of $\phi_w$ as well as a decrease in $S_W$ scores (same pattern occurs for other values of $\phi_m$). Imagine that the proposing side samples its preferences from the model with a smaller dispersion parameter, and hence has less variable preferences and demonstrates a higher competition for partners (than the opposite side). Intuitively, during the course of \DA{} those agents would propose to the same set of partners over and over again and their potential partners get to choose from a diverse set of proposals, which improves the rank of their matches. This makes the total rank of proposers' partners greater than that of the agents from the accepting side. This fact distinguishes the Mallows model from the Uniform model in which the proposing side gets a substantially smaller score than the accepting side \citep{pittel1989average}. 


\subsection{Symmetric Markets ($\phi_m = \phi_w$)} \label{sec:symmetric}

 When the preferences of men and women have the same intensity of correlation, reflected in zero correlation disparity, the lattice size can quickly become very large. The size of the stable lattice is expected to be asymptotically exponential  \citep{levy2017novel}, which is confirmed in our experiments: even for small number of agents as $n=300$, there were significantly large (compared to the Uniform model) lattices with up to $147456$ matchings (\cref{table:lat_size_uni_mal_pol}). Despite the large size of the stable lattice, the cost of a sex-equal solution is considerably close to the cost of \DA{} outcomes (\Cref{fig:cost_se}). In this case, the \DA{} algorithm can be considered cost-effective as the exhaustive search takes significantly more time for large lattices and gives only a moderate gain in the sex-equality cost. 

We discussed in \cref{sec:asymmetric}, that under the Mallows model, $S_M$-optimal and $S_W$-pessimal scores (similarly $S_W$-optimal, $S_M$-pessimal) are close to each other and their distributions overlap significantly. This can explain the small sex-equality cost of \DA{} outcomes (see \Cref{fig:M_W_0.9}). It could indicate, that under this scenario, the expected score of the proposing side under \DA{} is similar to the expected score of the receivers. This finding differs sharply from the Uniform model wherein the expected pessimal score is substantially greater than the optimal one \cite{pittel1989average, ashlagi2020tiered}. As the difference between the welfare of men and women is sufficiently small, the importance of which group proposes (men or women) in \DA{} becomes negligible.
Hence, not only can the \DA{} algorithm be adopted to ensure sex-equality, but it may also  provide a natural framework for achieving \textit{procedural fairness} as it has recently been studied \cite{tziavelis20,tziavelis2019equitable}.

\subsection{\textbf{Implications on manipulation strategies}}
In our experiments, the Mallows model with symmetric and asymmetric correlations results in a considerably small difference between the pessimal and the optimal scores for one side (\textit{the welfare gap}), for example, between $S_W$ pessimal and $S_W$-optimal (\Cref{fig:M_W_0.9}). This difference represents how much on average an agent can improve by shifting from her worst stable partner to her best one, and therefore has implications on manipulation incentives \cite{ashlagi2017unbalanced, rheingans2020large}. Since the welfare gap is small for both men and women, in symmetric and asymmetric correlation settings, the accepting side has a limited scope/incentive for manipulations in practice. This observation suggests that one-to-one markets with Mallows preferences might be less susceptible to manipulation compared to the Uniform model, for which a larger welfare gap, and subsequently, higher incentive for manipulation has been established \cite{pittel1989average} and shown empirically \cite{teo2001gale, HUV21accomplice}.

\begin{figure*}
\begin{minipage}{0.48\linewidth}  
    \centering
    \includegraphics[width = \columnwidth]{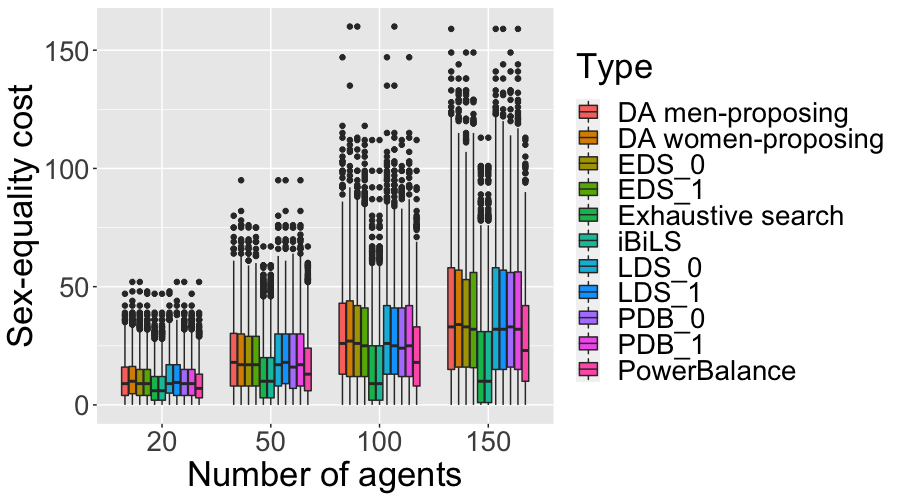}
    \captionof{figure}{Mallows model (symmetric): comparing the sex-equality cost of various matching algorithms when $\phi_m=\phi_w=0.5$. Subscript 0 indicates that men start proposing first, and 1 indicates women propose first.}
    \label{fig:alg_perf_M}
     \end{minipage}
     \hfill
\begin{minipage}{0.48\linewidth}
    \centering
    \includegraphics[width = \columnwidth]{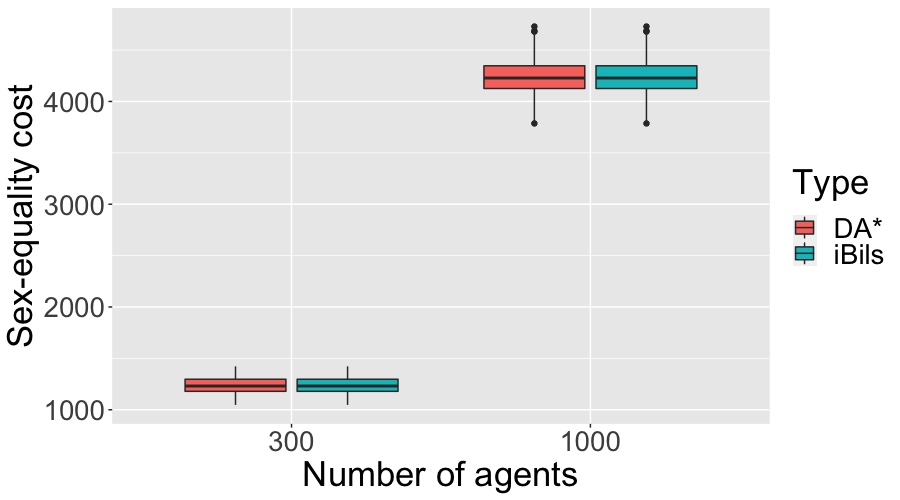}
    \captionof{figure}{Mallows model (asymmetric): the sex-equality cost of iBILS and a pre-processed Deferred Acceptance (\DA{}*) for an asymmetric correlation market with $\phi_m = 0.5$, $\phi_w=0.7$.}
    \label{fig:IBILS_DA_comparison}
    \end{minipage}
\end{figure*}

\section{Comparing with Other Algorithms} \label{sec:comparison}
We compare the \DA{} algorithm with several state-of-the art algorithms (as implemented in \cite{tziavelis20}) such as multiple variants of the procedurally fair algorithms, and the best known heuristic algorithm when preferences are drawn from the Mallows model.\footnote{The comparison of algorithms for the Uniform model can be found in \cref{app:alg_uni}.} We use an \textit{exhaustive search} algorithm as our baseline to find sex-equal stable matchings \cite{tziavelis2019equitable}.
%
In particular, we test several procedures allowing both sides to issue and receive proposals such as \textit{Late Discontent Dispension (LDS)}, \textit{Early Discontent Dispension}, and \textit{Powerbalance} \cite{tziavelis20}. We also measure the performance of an enhanced \textit{bidirectional local search} \textit{iBILS}, which is one of the best-performing existing heuristic for finding sex-equal stable matchings.

\subsection{Experimental Setup}

We evaluate the sex-equality cost and the run-time of these algorithms for the number of agents varying from 20 to 1000. For the symmetric correlation case, we select $\phi_m=\phi_w=0.5$ as these parameters results in the largest lattice size among all combinations of $\phi_m$ and $\phi_w$ (see \cref{table:summary_phi}). We sample 1000 instances of the stable matching for each experimental setting.

The \textit{exhaustive search} algorithm uses the \textit{break-marriage} operation of \citet{mcvitie1971stable}: breaking a single man's assignment initiates a chain of proposals and rejections. It terminates when either all women reject the last man, or his proposal is accepted by a single woman. The latter case results in a new stable matching. All the elements of the stable lattice can be found by applying this operation recursively from the men-optimal solution  \citep{mcvitie1971stable}.\footnote{An example of the break-marriage operation is provided in \cref{app:prelim}.}

The \textit{enhanced bidirectional local search}, iBILS, uses rotations to generate the successors of a stable matching to search over the stable lattice starting from its extreme points \cite{tziavelis2019equitable, viet2016bidirectional}.

\subsection{Fairness and Running Time}

\textbf{Symmetric correlation.} For symmetric correlation markets we used a smaller number of agents (up to 150), to avoid instances with extremely large lattices, for which exhaustive search would not terminate in a reasonable time (\cref{sec:symmetric}). Under symmetric correlations, stable lattice can grow exponentially  \citep{levy2017novel}.
In our experiments, all algorithms performed similarly on average with respect to the sex-equality cost (Figure \ref{fig:alg_perf_M}). Among them, the iBILS algorithm performed better, and in most instances, found the sex-equal stable matching that was computed by the exhaustive search algorithm.
These results and the good performance of other algorithms are justified by the fact that under the Mallows model, $S_M$ and $S_W$ scores of all stable matchings are close to one another (i.e. the welfare gap between the two sets is small). Therefore, many stable matchings have costs close to that of the the sex-equal solution.  

While the \DA{} and procedurally fair algorithms  give higher sex-equality costs (on average twice as much as the exhaustive search and the iBILS algorithms), they have much smaller running times. For larger instances, this discrepancy between the running times will become even more pronounced. The exhaustive search has exponential complexity in the worst case. The running time of the iBILS algorithm depends on the stable lattice size and its configuration: it is $\mathcal{O}(dkn^2)$, where $d$ is the maximum search depth and $k$ is the maximum width of the lattice \cite{viet2020shortlist,tziavelis2019equitable}. The \DA{} algorithm and procedurally fair algorithms run in polynomial-time.

\noindent\textbf{Asymmetric correlation.}
In asymmetric correlation markets, we focused on comparing the performance of iBILS and \DA{} with additional pre-processing step (aka \DA{}*). Given the findings in \cref{sec:asymmetric}, we modify the \DA{} algorithm by deciding which side should propose according to the dispersion parameters: the side with smaller dispersion parameter proposes, i.e. when $\phi_m<\phi_w$ men propose, otherwise women propose. In real-world markets, dispersion parameters can be inferred from preference profiles in polynomial time when the reference ranking is known \cite{irurozki2014sampling}.



\cref{fig:IBILS_DA_comparison} presents the sex-equality cost of \DA{}* compared to the iBILS algorithm for large markets with 300 and 1000 agents. In almost all instances, \DA{}* finds the same stable matching as iBILS. More importantly, as shown in \Cref{fig:IBILS_DA_time_comp}, the run-time of the \DA{}* remains in $\mathcal{O}(n^2)$ while iBILS requires more steps to estimate a sex-equal solution.
Given the performance of iBILS in most cases, our results suggest that \DA{}* is a reasonable cost-effective fair algorithm for matching markets with asymmetric correlations.

\begin{figure}[h!]
    \centering
     \includegraphics[width= .6\columnwidth]{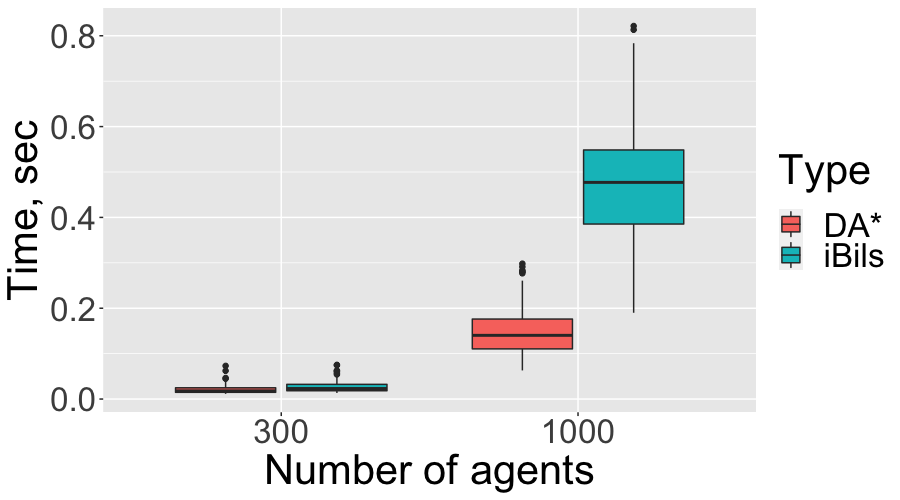}
    \caption{Mallows model  (asymmetric): the running time of iBILS and a pre-processed Deferred Acceptance (\DA{}*) for an asymmetric correlation market with $\phi_m = 0.5$, $\phi_w=0.7$ (other combinations of parameters give analogous results).
    }
    \label{fig:IBILS_DA_time_comp}
\end{figure}


\section{Concluding Remarks}

For decades, market design has played a crucial role in setting the foundations of decision-making through mechanism design. 
While the theoretical guarantees sometimes fall short due to incompatibilities between axiomatic requirements or challenges imposed by computational complexity, empirical evaluations can still shed light on practical implications and guide the design of new mechanisms. This in turn could help produce new problems and directions for traditional theory.

In this spirit, our extensive empirical investigations revealed an intriguing relation between the disparity of preferences of both sides and the fairness of stable matchings. We showed that surprisingly, the primary factor affecting the sex-equality cost of the \DA{} algorithm is the difference between dispersion parameters of both sides. These observations suggest that the \DA{} algorithm could still be a prime candidate both in symmetric and asymmetric markets, further justifying its wide use in practice.

From the theoretical perspective, investigating the bounds of fairness in stable matchings under correlated preferences is certainly an interesting future direction.
From the practical perspective, a noteworthy, and perhaps more important, direction is to investigate the structure of preferences in various matching settings (e.g. refugee matching, residency matching, and school choice) and develop domain-specific models that can correctly capture the correlations between the preferences with high accuracy. We believe that these models, alongside with theoretical studies, should shape governmental and societal policies---prescribed by social planners---in the adoption of suitable mechanisms in each specific domain.

\section*{Acknowledgments}
Hadi Hosseini acknowledges support from NSF IIS grants \#2052488 and \#2107173.
We are grateful to the anonymous referees for their valuable comments and suggestions.




\bibliographystyle{plainnat} 
\bibliography{ref,references}

\begin{thebibliography}{75}
\providecommand{\natexlab}[1]{#1}
\providecommand{\url}[1]{\texttt{#1}}
\expandafter\ifx\csname urlstyle\endcsname\relax
  \providecommand{\doi}[1]{doi: #1}\else
  \providecommand{\doi}{doi: \begingroup \urlstyle{rm}\Url}\fi

\bibitem[Abdulkadiro{\u{g}}lu and S{\"o}nmez(2003)]{abdulkadirouglu2003school}
Atila Abdulkadiro{\u{g}}lu and Tayfun S{\"o}nmez.
\newblock School choice: A mechanism design approach.
\newblock \emph{American economic review}, 93\penalty0 (3):\penalty0 729--747,
  2003.

\bibitem[Abdulkadiro{\u{g}}lu et~al.(2005{\natexlab{a}})Abdulkadiro{\u{g}}lu,
  Pathak, and Roth]{APR05new}
Atila Abdulkadiro{\u{g}}lu, Parag~A Pathak, and Alvin~E Roth.
\newblock {The New York City High School Match}.
\newblock \emph{American Economic Review}, 95\penalty0 (2):\penalty0 364--367,
  2005{\natexlab{a}}.

\bibitem[Abdulkadiro{\u{g}}lu et~al.(2005{\natexlab{b}})Abdulkadiro{\u{g}}lu,
  Pathak, and Roth]{abdulkadirouglu2005new}
Atila Abdulkadiro{\u{g}}lu, Parag~A Pathak, and Alvin~E Roth.
\newblock The new york city high school match.
\newblock \emph{American Economic Review}, 95\penalty0 (2):\penalty0 364--367,
  2005{\natexlab{b}}.

\bibitem[Abdulkadiro{\u{g}}lu et~al.(2005{\natexlab{c}})Abdulkadiro{\u{g}}lu,
  Pathak, Roth, and S{\"o}nmez]{APR+05boston}
Atila Abdulkadiro{\u{g}}lu, Parag~A Pathak, Alvin~E Roth, and Tayfun
  S{\"o}nmez.
\newblock {The Boston Public School Match}.
\newblock \emph{American Economic Review}, 95\penalty0 (2):\penalty0 368--371,
  2005{\natexlab{c}}.

\bibitem[Aldershof et~al.(1999)Aldershof, Carducci, and
  Lorenc]{aldershof1999refined}
Brian Aldershof, Olivia~M Carducci, and David~C Lorenc.
\newblock Refined inequalities for stable marriage.
\newblock \emph{Constraints}, 4\penalty0 (3):\penalty0 281--292, 1999.

\bibitem[Ashlagi et~al.(2017)Ashlagi, Kanoria, and
  Leshno]{ashlagi2017unbalanced}
Itai Ashlagi, Yash Kanoria, and Jacob~D Leshno.
\newblock Unbalanced random matching markets: The stark effect of competition.
\newblock \emph{Journal of Political Economy}, 125\penalty0 (1):\penalty0
  69--98, 2017.

\bibitem[Ashlagi et~al.(2020)Ashlagi, Braverman, Thomas, and
  Zhao]{ashlagi2020tiered}
Itai Ashlagi, Mark Braverman, Clayton Thomas, and Geng Zhao.
\newblock Tiered random matching markets: Rank is proportional to popularity.
\newblock \emph{arXiv preprint arXiv:2009.05124}, 2020.

\bibitem[Awasthi et~al.(2014)Awasthi, Blum, Sheffet, and
  Vijayaraghavan]{awasthi2014learning}
Pranjal Awasthi, Avrim Blum, Or~Sheffet, and Aravindan Vijayaraghavan.
\newblock Learning mixtures of ranking models.
\newblock In \emph{Advances in Neural Information Processing Systems}, pages
  2609--2617, 2014.

\bibitem[Banerjee and Johari(2019)]{banerjee2019ride}
Siddhartha Banerjee and Ramesh Johari.
\newblock Ride sharing.
\newblock In \emph{Sharing economy}, pages 73--97. Springer, 2019.

\bibitem[Beyhaghi et~al.(2017)Beyhaghi, Saban, and Tardos]{beyhaghi2017effect}
Hedyeh Beyhaghi, Daniela Saban, and Eva Tardos.
\newblock Effect of selfish choices in deferred acceptance with short lists.
\newblock \emph{arXiv preprint arXiv:1701.00849}, 2017.

\bibitem[Bhatnagar et~al.(2008)Bhatnagar, Greenberg, and
  Randall]{bhatnagar2008sampling}
Nayantara Bhatnagar, Sam Greenberg, and Dana Randall.
\newblock Sampling stable marriages: why spouse-swapping won't work.
\newblock In \emph{Proceedings of the nineteenth annual ACM-SIAM symposium on
  Discrete algorithms}, pages 1223--1232. Society for Industrial and Applied
  Mathematics, 2008.

\bibitem[Black et~al.(1958)]{black1958theory}
Duncan Black et~al.
\newblock The theory of committees and elections.
\newblock 1958.

\bibitem[Boudreau and Knoblauch(2010)]{boudreau2010marriage}
James~W Boudreau and Vicki Knoblauch.
\newblock Marriage matching and intercorrelation of preferences.
\newblock \emph{Journal of Public Economic Theory}, 12\penalty0 (3):\penalty0
  587--602, 2010.

\bibitem[Campo and Ern{\'e}(2018)]{campo2018exponential}
Frank~A Campo and Marcel Ern{\'e}.
\newblock Exponential functions of finite posets and the number of extensions
  with a fixed set of minimal points.
\newblock \emph{arXiv preprint arXiv:1802.01419}, 2018.

\bibitem[Caragiannis et~al.(2013)Caragiannis, Procaccia, and
  Shah]{caragiannis2013noisy}
Ioannis Caragiannis, Ariel~D Procaccia, and Nisarg Shah.
\newblock When do noisy votes reveal the truth?
\newblock In \emph{Proceedings of the Fourteenth ACM Conference on Electronic
  Commerce}, pages 143--160, 2013.

\bibitem[Celik and Knoblauch(2007)]{celik2007marriage}
Onur~B Celik and Vicki Knoblauch.
\newblock Marriage matching with correlated preferences.
\newblock 2007.

\bibitem[Cheng(2010)]{cheng2010understanding}
Christine~T Cheng.
\newblock Understanding the generalized median stable matchings.
\newblock \emph{Algorithmica}, 58\penalty0 (1):\penalty0 34--51, 2010.

\bibitem[Cheng and Rosenbaum(2021)]{cheng2021stable}
Christine~T Cheng and Will Rosenbaum.
\newblock Stable matchings with restricted preferences: Structure and
  complexity.
\newblock In \emph{Proceedings of the 22nd ACM Conference on Economics and
  Computation}, pages 319--339, 2021.

\bibitem[Delacretaz et~al.(2019)Delacretaz, Kominers, and
  Teytelboym]{delacretaz2019matching}
David Delacretaz, Scott Kominers, and Alexander Teytelboym.
\newblock Matching mechanisms for refugee resettlement.
\newblock Technical report, Human Capital and Economic Opportunity Working
  Group, 2019.

\bibitem[Demange et~al.(1987)Demange, Gale, and Sotomayor]{demange1987further}
Gabrielle Demange, David Gale, and Marilda Sotomayor.
\newblock A further note on the stable matching problem.
\newblock \emph{Discrete Applied Mathematics}, 16\penalty0 (3):\penalty0
  217--222, 1987.

\bibitem[Feder(1992)]{feder1992new}
Tom{\'a}s Feder.
\newblock A new fixed point approach for stable networks and stable marriages.
\newblock \emph{Journal of Computer and System Sciences}, 45\penalty0
  (2):\penalty0 233--284, 1992.

\bibitem[Freeman et~al.()Freeman, Micha, and Shah]{freemantwo}
Rupert Freeman, Evi Micha, and Nisarg Shah.
\newblock Two-sided matching meets fair division.

\bibitem[Gale and Shapley(1962)]{gale1962college}
David Gale and Lloyd~S Shapley.
\newblock College admissions and the stability of marriage.
\newblock \emph{The American Mathematical Monthly}, 69\penalty0 (1):\penalty0
  9--15, 1962.

\bibitem[Gale and Sotomayor(1985)]{GS85some}
David Gale and Marilda Sotomayor.
\newblock {Some Remarks on the Stable Matching Problem}.
\newblock \emph{Discrete Applied Mathematics}, 11\penalty0 (3):\penalty0
  223--232, 1985.

\bibitem[Gupta et~al.(2017)Gupta, Saurabh, and Zehavi]{gupta2017treewidth}
Sushmita Gupta, Saket Saurabh, and Meirav Zehavi.
\newblock On treewidth and stable marriage.
\newblock \emph{arXiv preprint arXiv:1707.05404}, 2017.

\bibitem[Gusfield(1987)]{gusfield1987three}
Dan Gusfield.
\newblock Three fast algorithms for four problems in stable marriage.
\newblock \emph{SIAM Journal on Computing}, 16\penalty0 (1):\penalty0 111--128,
  1987.

\bibitem[Gusfield and Irving(1989)]{gusfield1989stable}
Dan Gusfield and Robert~W Irving.
\newblock \emph{The stable marriage problem: structure and algorithms}.
\newblock MIT press, 1989.

\bibitem[Hall et~al.(2017)Hall, Horton, and Knoepfle]{hall2017labor}
Jonathan~V Hall, John~J Horton, and Daniel~T Knoepfle.
\newblock Labor market equilibration: Evidence from uber.
\newblock 2017.
\newblock Working paper. Revision requested from American Economic Review.

\bibitem[Hassidim et~al.(2018)Hassidim, Romm, and Shorrer]{hassidim2018need}
Avinatan Hassidim, Assaf Romm, and Ran~I Shorrer.
\newblock Need vs. merit: The large core of college admissions markets.
\newblock \emph{Merit: The Large Core of College Admissions Markets (June 28,
  2018)}, 2018.

\bibitem[Hosseini et~al.(2021)Hosseini, Umar, and Vaish]{HUV21accomplice}
Hadi Hosseini, Fatima Umar, and Rohit Vaish.
\newblock {Accomplice Manipulation of the Deferred Acceptance Algorithm}.
\newblock In \emph{Proceedings of the 30th International Joint Conference on
  Artificial Intelligence}, pages 231--237, 2021.

\bibitem[Huang(2010)]{huang2010classified}
Chien-Chung Huang.
\newblock Classified stable matching.
\newblock In \emph{Proceedings of the twenty-first annual ACM-SIAM symposium on
  Discrete Algorithms}, pages 1235--1253. SIAM, 2010.

\bibitem[Immorlica and Mahdian(2015)]{immorlica2015incentives}
Nicole Immorlica and Mohammad Mahdian.
\newblock Incentives in large random two-sided markets.
\newblock \emph{ACM Transactions on Economics and Computation (TEAC)},
  3\penalty0 (3):\penalty0 1--25, 2015.

\bibitem[Irurozki(2014)]{irurozki2014sampling}
Ekhine Irurozki.
\newblock \emph{Sampling and learning distance-based probability models for
  permutation spaces}.
\newblock PhD thesis, 2014.

\bibitem[Irurozki et~al.(2019)Irurozki, Calvo, Lozano,
  et~al.]{irurozki2019mallows}
Ekhine Irurozki, Borja Calvo, Jose~A Lozano, et~al.
\newblock Mallows and generalized mallows model for matchings.
\newblock \emph{Bernoulli}, 25\penalty0 (2):\penalty0 1160--1188, 2019.

\bibitem[Irving and Leather(1986)]{irving1986complexity}
Robert~W Irving and Paul Leather.
\newblock The complexity of counting stable marriages.
\newblock \emph{SIAM Journal on Computing}, 15\penalty0 (3):\penalty0 655--667,
  1986.

\bibitem[Irving et~al.(1987)Irving, Leather, and Gusfield]{irving1987efficient}
Robert~W Irving, Paul Leather, and Dan Gusfield.
\newblock An efficient algorithm for the “optimal” stable marriage.
\newblock \emph{Journal of the ACM (JACM)}, 34\penalty0 (3):\penalty0 532--543,
  1987.

\bibitem[Irving et~al.(2008)Irving, Manlove, and Scott]{irving2008stable}
Robert~W Irving, David~F Manlove, and Sandy Scott.
\newblock The stable marriage problem with master preference lists.
\newblock \emph{Discrete Applied Mathematics}, 156\penalty0 (15):\penalty0
  2959--2977, 2008.

\bibitem[Jones and Teytelboym(2017)]{jones2017refugee}
Will Jones and Alexander Teytelboym.
\newblock Matching systems for refugees.
\newblock \emph{Journal on Migration and Human Security}, 5\penalty0
  (3):\penalty0 667--681, 2017.
\newblock \doi{10.1177/233150241700500306}.
\newblock URL \url{https://doi.org/10.1177/233150241700500306}.

\bibitem[Kamada and Kojima(2015)]{kamada2015efficient}
Yuichiro Kamada and Fuhito Kojima.
\newblock Efficient matching under distributional constraints: Theory and
  applications.
\newblock \emph{American Economic Review}, 105\penalty0 (1):\penalty0 67--99,
  2015.

\bibitem[Kato(1993)]{kato1993complexity}
Akiko Kato.
\newblock Complexity of the sex-equal stable marriage problem.
\newblock \emph{Japan Journal of Industrial and Applied Mathematics},
  10\penalty0 (1):\penalty0 1--19, 1993.

\bibitem[Kendall(1938)]{kendall1938new}
Maurice~G Kendall.
\newblock A new measure of rank correlation.
\newblock \emph{Biometrika}, 30\penalty0 (1/2):\penalty0 81--93, 1938.

\bibitem[Kendall et~al.(1948)]{kendall1948rank}
MG~Kendall et~al.
\newblock Rank correlation methods.
\newblock \emph{Rank correlation methods.}, 1948.

\bibitem[Klaus and Klijn(2006)]{klaus2006procedurally}
Bettina Klaus and Flip Klijn.
\newblock Procedurally fair and stable matching.
\newblock \emph{Economic Theory}, 27\penalty0 (2):\penalty0 431--447, 2006.

\bibitem[Knoblauch(2009)]{knoblauch2009marriage}
Vicki Knoblauch.
\newblock Marriage matching and gender satisfaction.
\newblock \emph{Social Choice and Welfare}, 32\penalty0 (1):\penalty0 15--27,
  2009.

\bibitem[Knuth et~al.(1990)Knuth, Motwani, and Pittel]{knuth1990stable}
Donald~E Knuth, Rajeev Motwani, and Boris Pittel.
\newblock Stable husbands.
\newblock \emph{Random Structures \& Algorithms}, 1\penalty0 (1):\penalty0
  1--14, 1990.

\bibitem[Kojima and Pathak(2009)]{kojima2009incentives}
Fuhito Kojima and Parag~A Pathak.
\newblock Incentives and stability in large two-sided matching markets.
\newblock \emph{American Economic Review}, 99\penalty0 (3):\penalty0 608--27,
  2009.

\bibitem[Lee(2016)]{lee2016incentive}
SangMok Lee.
\newblock Incentive compatibility of large centralized matching markets.
\newblock \emph{The Review of Economic Studies}, 84\penalty0 (1):\penalty0
  444--463, 2016.

\bibitem[Levy(2017)]{levy2017novel}
Avi~William Levy.
\newblock \emph{Novel uses of the Mallows model in coloring and matching}.
\newblock PhD thesis, Dept. of Mathematics, University of Washington, Seattle,
  2017.

\bibitem[Lu and Boutilier(2014)]{lu2014effective}
Tyler Lu and Craig Boutilier.
\newblock Effective sampling and learning for mallows models with
  pairwise-preference data.
\newblock \emph{The Journal of Machine Learning Research}, 15\penalty0
  (1):\penalty0 3783--3829, 2014.

\bibitem[Mallows(1957)]{mallows1957non}
Colin~L Mallows.
\newblock Non-null ranking models. i.
\newblock \emph{Biometrika}, 44\penalty0 (1/2):\penalty0 114--130, 1957.

\bibitem[Marden(1995)]{marden1995analyzing}
John~I Marden.
\newblock Analyzing and modeling rank data.
\newblock 1995.

\bibitem[Mattei and Walsh(2013)]{MaWa13a}
Nicholas Mattei and Toby Walsh.
\newblock Preflib: A library of preference data \textsc{http://preflib.org}.
\newblock In \emph{Proceedings of the 3rd International Conference on
  Algorithmic Decision Theory (ADT 2013)}, Lecture Notes in Artificial
  Intelligence. Springer, 2013.

\bibitem[McDermid and Irving(2014)]{mcdermid2014sex}
Eric McDermid and Robert~W Irving.
\newblock Sex-equal stable matchings: Complexity and exact algorithms.
\newblock \emph{Algorithmica}, 68\penalty0 (3):\penalty0 545--570, 2014.

\bibitem[McVitie and Wilson(1971)]{mcvitie1971stable}
David~G McVitie and Leslie~B Wilson.
\newblock The stable marriage problem.
\newblock \emph{Communications of the ACM}, 14\penalty0 (7):\penalty0 486--490,
  1971.

\bibitem[Nakamura et~al.(1995)Nakamura, Onaga, Kyan, and
  Silva]{nakamura1995genetic}
Morikazu Nakamura, Kenji Onaga, Seiki Kyan, and Manuel Silva.
\newblock Genetic algorithm for sex-fair stable marriage problem.
\newblock In \emph{Proceedings of ISCAS'95-International Symposium on Circuits
  and Systems}, volume~1, pages 509--512. IEEE, 1995.

\bibitem[Nguyen and Vohra(2019)]{nguyen2019stable}
Th{\`a}nh Nguyen and Rakesh Vohra.
\newblock Stable matching with proportionality constraints.
\newblock \emph{Operations Research}, 67\penalty0 (6):\penalty0 1503--1519,
  2019.

\bibitem[Pathak and S{\"o}nmez(2007)]{pathak2007leveling}
Parag Pathak and Tayfun S{\"o}nmez.
\newblock c.
\newblock 2007.

\bibitem[Pittel(1989)]{pittel1989average}
Boris Pittel.
\newblock The average number of stable matchings.
\newblock \emph{SIAM Journal on Discrete Mathematics}, 2\penalty0 (4):\penalty0
  530--549, 1989.

\bibitem[Rheingans-Yoo and Street(2020)]{rheingans2020large}
Ross Rheingans-Yoo and Jane Street.
\newblock Large random matching markets with localized preference structures
  can exhibit large cores.
\newblock Technical report, Mimeo, 2020.

\bibitem[Romero-Medina(2005)]{romero2005equitable}
Antonio Romero-Medina.
\newblock Equitable selection in bilateral matching markets.
\newblock \emph{Theory and Decision}, 58\penalty0 (3):\penalty0 305--324, 2005.

\bibitem[Roth(1982)]{roth1982economics}
Alvin~E Roth.
\newblock The economics of matching: Stability and incentives.
\newblock \emph{Mathematics of operations research}, 7\penalty0 (4):\penalty0
  617--628, 1982.

\bibitem[Roth(1984)]{R84evolution}
Alvin~E Roth.
\newblock {The Evolution of the Labor Market for Medical Interns and Residents:
  A Case Study in Game Theory}.
\newblock \emph{Journal of Political Economy}, 92\penalty0 (6):\penalty0
  991--1016, 1984.

\bibitem[Roth(2002)]{roth2002economist}
Alvin~E Roth.
\newblock The economist as engineer: Game theory, experimentation, and
  computation as tools for design economics.
\newblock \emph{Econometrica}, 70\penalty0 (4):\penalty0 1341--1378, 2002.

\bibitem[Roth and Peranson(1999)]{roth1999redesign}
Alvin~E Roth and Elliott Peranson.
\newblock The redesign of the matching market for american physicians: Some
  engineering aspects of economic design.
\newblock \emph{American economic review}, 89\penalty0 (4):\penalty0 748--780,
  1999.

\bibitem[Roth and Vate(1990)]{roth1990random}
Alvin~E Roth and John H~Vande Vate.
\newblock Random paths to stability in two-sided matching.
\newblock \emph{Econometrica: Journal of the Econometric Society}, pages
  1475--1480, 1990.

\bibitem[Salonen and Salonen(2018)]{salonen2018mutually}
Hannu Salonen and Mikko~AA Salonen.
\newblock Mutually best matches.
\newblock \emph{Mathematical Social Sciences}, 91:\penalty0 42--50, 2018.

\bibitem[Shen et~al.(2018)Shen, Tang, and Deng]{shen2018coalition}
Weiran Shen, Pingzhong Tang, and Yuan Deng.
\newblock Coalition manipulation of gale-shapley algorithm.
\newblock In \emph{Proceedings of the AAAI Conference on Artificial
  Intelligence}, volume~32, 2018.

\bibitem[Steyvers et~al.(2009)Steyvers, Miller, Hemmer, and
  Lee]{steyvers2009wisdom}
Mark Steyvers, Brent Miller, Pernille Hemmer, and Michael~D Lee.
\newblock The wisdom of crowds in the recollection of order information.
\newblock In \emph{Advances in Neural Information Processing Systems}, pages
  1785--1793, 2009.

\bibitem[Teo et~al.(2001)Teo, Sethuraman, and Tan]{teo2001gale}
Chung-Piaw Teo, Jay Sethuraman, and Wee-Peng Tan.
\newblock Gale-shapley stable marriage problem revisited: Strategic issues and
  applications.
\newblock \emph{Management Science}, 47\penalty0 (9):\penalty0 1252--1267,
  2001.

\bibitem[Tziavelis et~al.(2019)Tziavelis, Giannakopoulos, Doka, Koziris, and
  Karras]{tziavelis2019equitable}
Nikolaos Tziavelis, Ioannis Giannakopoulos, Katerina Doka, Nectarios Koziris,
  and Panagiotis Karras.
\newblock Equitable stable matchings in quadratic time.
\newblock In \emph{Advances in Neural Information Processing Systems}, pages
  455--465, 2019.

\bibitem[Tziavelis et~al.(2020)Tziavelis, Giannakopoulos, Johansen, Doka,
  Koziris, and Karras]{tziavelis20}
Nikolaos Tziavelis, Ioannis Giannakopoulos, Rune~Quist Johansen, Katerina Doka,
  Nectarios Koziris, and Panagiotis Karras.
\newblock Fair procedures for fair stable marriage outcomes.
\newblock In \emph{Thirty-Fourth AAAI Conference on Artificial Intelligence},
  2020.

\bibitem[Vaish and Garg(2017)]{vaish2017manipulating}
Rohit Vaish and Dinesh Garg.
\newblock Manipulating gale-shapley algorithm: Preserving stability and
  remaining inconspicuous.
\newblock In \emph{IJCAI}, pages 437--443, 2017.

\bibitem[Viet et~al.(2016)Viet, Lee, Chung, et~al.]{viet2016bidirectional}
Hoang~Huu Viet, SeungGwan Lee, TaeChoong Chung, et~al.
\newblock A bidirectional local search for the stable marriage problem.
\newblock In \emph{2016 International Conference on Advanced Computing and
  Applications (ACOMP)}, pages 18--24. IEEE, 2016.

\bibitem[Viet et~al.(2020)Viet, Trang, Tuyen, and Chung]{viet2020shortlist}
Hoang~Huu Viet, Le~Hong Trang, Le~Pham Tuyen, and Taechoong Chung.
\newblock A shortlist-based bidirectional local search for the stable marriage
  problem.
\newblock \emph{Journal of Experimental \& Theoretical Artificial
  Intelligence}, 32\penalty0 (1):\penalty0 147--163, 2020.

\bibitem[Vitelli et~al.(2017)Vitelli, S{\o}rensen, Crispino, Frigessi, and
  Arjas]{vitelli2017probabilistic}
Valeria Vitelli, {\O}ystein S{\o}rensen, Marta Crispino, Arnoldo Frigessi, and
  Elja Arjas.
\newblock Probabilistic preference learning with the mallows rank model.
\newblock \emph{The Journal of Machine Learning Research}, 18\penalty0
  (1):\penalty0 5796--5844, 2017.

\end{thebibliography}

\clearpage

\begin{center}\huge
    \textbf{Supplementary Material}
\end{center}
\appendix

\section{Additional Preliminaries} \label{app:prelim}

In this section, we give additional preliminaries, definitions, and examples to provide a complete view to the readers.

\subsubsection*{\textbf{Dominance}}
Stable matching $\mu'$ dominates stable matching $\mu$ if it assigns all men a partner at least as good as in $\mu$ and a better partner to at least one man \citep{gusfield1989stable}. Formally, $\mu'$ dominates $\mu$ if $\forall m \in M$ $r(\mu'(m), m) \leq r(\mu(m),m)$ and $\exists m^{*} \in M$ such that $r(\mu'(m^{*}), m^{*}) < r(\mu(m^{*}), m^{*})$.

\subsubsection*{\textbf{Stable lattice}} 

Given a preference profile $\>$, the set of all corresponding stable matchings, $S_{\>}$, forms a \emph{distributive lattice} under a dominance relation (referred to as a \textit{stable lattice} hereinafter). Any pair of stable matchings in a stable lattice has a unique least upper bound (also called a \textit{join}) and a unique greatest lower bound (\textit{a meet}) within the lattice. In a join of two stable matchings, $\mu_1$, $\mu_2$, every man from $M$ is given its best partner among $\mu_1$, $\mu_2$. Similarly, in a meet all men are given their worse partners from $\mu_1$, $\mu_2$. 

In a stable lattice supremum, the men-optimal matching $\mu^{M}$, all men are matched to their favorite partners among all potential stable partners. It dominates every other matching in $S_\>$, formally $\forall m\in M$, $r(\mu^{M}(m), m) \leq r(\mu'(m), m)$ for all $\mu'\in S_{\>}$. Similarly, in a stable lattice infimum, the men-pessimal matching $\mu^{W}$, all men are given their worse stable partners. Men-pessimal matching is dominated by every other stable matching. Interestingly, a stable lattice infimum is simultaneously women-optimal matching: each woman is matched to her best stable partner in $S_\succ$, i.e.,  $\forall w\in W$, $r(\mu^{W}(w), w) \leq r(\mu'(w), w)$ for all $\mu'\in S_{\>}$. \Cref{fig:hasse} illustrates a lattice consisting of six stable matchings.

\subsubsection*{\textbf{Rotation poset.}}

For any stable lattice $S_\succ$ there exists an associated partially ordered set (\textit{a rotation poset}, $\mathcal{R}(S_\succ)$), whose downsets are in one-to-one correspondence with the matchings of the lattice \citep{gusfield1989stable}. The elements of the rotation poset---aka rotations---are specific sequences of pairs matched in some stable matching. For each man $m$ in a rotation the following holds: 1) the woman in his pair is his current partner $\mu(m)$; 2) the woman to the right of $m$ (or the first woman if $m$ is the last man), $w_{next}(m)$, would be the first to accept his proposal if he is disallowed to propose to $\mu(m)$ within the course of \DA{}. Formally, a rotation $\rho = (m_1,w_1), (m_2, w_2), ..., (m_k, w_k)$ is a sequence of pairs in a matching such that $m_i$ is matched to $w_i$, and $w_{(i+1)\ mod\ k}$ is the first woman in $m_i$'s preference list who prefers him to her current partner. For specific $\mu$ and $\rho$, if for every man $m_i \in \rho$ the above conditions holds, the rotation $\rho$ is said to be \textit{exposed} in $\mu$. One rotation can be exposed in multiple stable matchings of the same problem instance; a stable matching can have multiple exposed rotations \cite{gusfield1989stable} .

Given $\rho$ exposed in $\mu$, if every man $m_i$ from the rotation is matched to the woman to the right of him, $w_{i+1}$ (where $(i+1)$ is taken modulo size of $\rho$), and every man not in the rotation is assigned his current partner, then the resulting matching is also stable \citep{irving1986complexity}. Rotations can be thought of as the transition operations between the adjacent elements of the lattice. The rotation poset is a useful structure for studying the stable matching problem \citep{bhatnagar2008sampling, gusfield1987three, cheng2010understanding} as it contains $\mathcal{O}(n^{2})$ elements (in contrast to the exponential size of the stable lattice in the worst case) and yet encodes all stable matchings. We refer the reader to \citep{gusfield1989stable} for a more detailed discussion on the subject of rotations. 

\subsubsection*{\textbf{Example from \Cref{fig:hasse}: Stable Lattice and Rotations}}
In this section, we illustrate the concept of stable lattice, rotation poset and break-marriage operation using the preference profile given in \Cref{example:sex-equal}. 

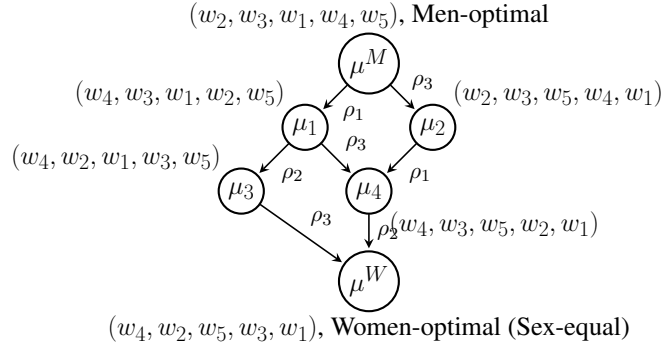
\begin{figure}[t]
    \centering
        \centering
        \begin{tikzpicture}[
                    every edge quotes/.append style={font=\LARGE},
                    every label/.append style={font = \Huge},
                    level 1/.style={sibling distance=20mm},
                    scale=0.4, 
                    every node/.append style={transform shape},
                    > = stealth, 
                    shorten > = 1pt, 
                    auto,
                    node distance = 3cm, 
                    semithick 
                ]
        
                \tikzstyle{every state}=[
                    font = \Huge,
                    draw = black,
                    thick,
                    fill = white,
                    minimum size = 15mm
                ]
                \node[state, label = above:{$(w_{2}, w_{3}, w_{1}, w_{4}, w_{5})$, Men-optimal}] (Mopt) {$\mu^{M}$};
                \node[state, label = above left:{$(w_{4},w_{3},w_{1},w_{2},w_{5})$}] (v1) [below left of=Mopt] {$\mu_{1}$};
                \node[state, label = above right:{$(w_{2},w_{3},w_{5},w_{4},w_{1})$}] (v2) [below right of=Mopt] {$\mu_{2}$};
                \node[state, label = above left:{$(w_{4},w_{2},w_{1},w_{3},w_{5})$}] (v3) [below left of=v1] {$\mu_{3}$};
                \node[state, label = below right:{$(w_{4},w_{3},w_{5},w_{2},w_{1})$}] (v4) [below left of=v2] {$\mu_{4}$};
    
                \node[state, label = below:{$(w_{4},w_{2},w_{5},w_{3},w_{1})$, Women-optimal (Sex-equal)}] (v5) [below  of=v4] {$\mu^{W}$};
    
                \path[->] (Mopt) edge[scale = 2.0] node {$\rho_{1}$} (v1);
                \path[->] (Mopt) edge[scale = 2.0] node {$\rho_{3}$} (v2);
                \path[->] (v1) edge[scale = 2.0] node {$\rho_{2}$} (v3);
                \path[->] (v1) edge[scale = 2.0] node {$\rho_{3}$} (v4);
                \path[->] (v4) edge[scale = 2.0] node {$\rho_{2}$} (v5);
                \path[->] (v2) edge[scale = 2.0] node {$\rho_{1}$} (v4);
				\path[->] (v3) edge[scale = 2.0] node {$\rho_{3}$} (v5);

        \end{tikzpicture}
         \caption{The stable lattice for the profile described in \cref{example:sex-equal} with six stable solutions.
         Matchings are denoted as lists of women matched to men according to their indices, rotations are indicated by $\rho_x$ on arcs.
         }
         \label{fig:hasse1}
\end{figure}

We run a men-proposing \DA{} to obtain the men-optimal matching, $\mu^M$ = ($w_2$, $w_3$, $w_1$, $w_4$, $w_5$), where women are matched to men according to their position in the matching, e.g. $w_2$ is matched to $m_1$, $w_3$ to $m_2$ and so on.

To identify a rotation exposed in $\mu^M$ we will use the concept of \textit{shortlists} \citep{irving1987efficient}. Given a stable matching $\mu$, let each woman $w \in W$ remove from her list all men, that she does not prefer over her current partner, i.e. every $m'$ s.t. $\mu(m) \succ_w m'$. Removed men delete woman $w$ from their preferences as well. In the resulting male-oriented shortlist, the following holds for every $m \in M$ : 1) $m$'s current partner is the first entry in his list $\succ_m$; 2)  the first woman who prefers $m$ to her partner, $w_{next}(m)$, is the second entry in $\succ_m$. Also, for every woman $w \in W$ her current partner is the last entry in her preference list $\succ_w$ \cite{irving1987efficient}.
\begin{table}[t]
\label{shortlist}
\small
   \centering
  \begin{tabularx}{\linewidth}{XXXXXXXXXXXXXXX}
           $m_1\colon$& $w_2^\dagger$& $w_4$& $w_5$& $w_1$& &&
           $w_1\colon$&$m_4$& $m_2$& $m_1$& $m_5$& $m_3^\dagger$\\
            $m_2\colon$& $w_3^\dagger$& $w_2$& $w_4$& $w_1$& $w_5$ &&
            $w_2\colon$&$m_2$& $m_4$& $m_1^\dagger$& & \\
            $m_3\colon$&$w_1^\dagger$& $w_5$& & & &&
            $w_3\colon$&$m_4$& $m_2^\dagger$& & & \\
            $m_4\colon$&$w_4^\dagger$& $w_2$& $w_3$& $w_1$& $w_5$ &&
            $w_4\colon$&$m_2$& $m_1$& $m_4^\dagger$& & \\
             $m_5\colon$&& & $w_5^\dagger$& $w_1$&  &&
             $w_5\colon$&$m_1$&$m_4$& $m_2$& $m_3$& $m_5^\dagger$
 \end{tabularx}
 \caption{The shortlist of men-optimal matching of \Cref{example:sex-equal} instance}
\end{table}

To identify an exposed rotation, we will take the first man $m_1$ and put it in front of a candidate rotation along with his current partner $\rho_1 = (m_1, \mu(m_1))$. The next pair in the sequence must contain $w_{next}(m_1)$ and her partner $\mu(w_{next}(m_1))$, i.e. $\rho_1 = (m_1, \mu(m_1)), (\mu(w_{next}(m_1)), w_{next}(m_1))$. We will repeat building the candidate rotation in this fashion. The rotation becomes valid as soon as one of the men appears twice in the same candidate rotation. After this happens, we will discard all pairs outside of the cycle: the remaining ordered sequence of pairs constitutes a rotation exposed in $\mu$. In order to identify all rotations exposed in the same matching, the same procedure must be repeated for all men that did not appear in a valid rotation and were not discarded in the course of building one.

We will identify all rotations exposed in $\mu^{M}$ to obtain several stable matchings adjacent to it. We start a candidate rotation with putting $m_1$ and his partner $w_2$ to the beginning: $\rho_1 = (m_1, w_2)$. The first woman who prefers $m_1$ over her current partner is $w_4$, and we append her and her partner as the next pair in the sequence: $\rho_1 = (m_1, w_2)(m_4, w_4)$. The next pair in the sequence will be ($m_1$, $w_2$) as $w_{next}(m_4) = w_2$. As it is the second appearance of $m_1$ in the sequence we found a rotation exposed in $\mu^M$. We continue in this fashion with the next man $m_3$ as a founder of a new candidate rotation, that leads to $\rho_3 = (m_3, w_1), (m_5, w_5)$. Using $m_2$ as the first man of a rotation leads to already found $\rho_1$. 

When rotation $\rho_1$ is applied to $\mu^M$, $m_1$ is matched to $w_4$, $m_4$ to $w_2$, giving stable matching $\mu_1 = (w_4, w_3, w_1, w_2, w_5)$. Similarly, the application of $\rho_3$ gives $\mu_2=(w_2, w_3, w_5, w_4, w_1)$. By finding and applying rotations exposed in $\mu_1$, $\mu_2$ one can recreate the lattice from 
\cref{example:sex-equal} with 6 elements.

One can also use a \textbf{break-marriage algorithm} to find all the elements of the lattice: it is a recursive procedure that starts with breaking assignments of the fist man in the men-optimal matching, and letting him propose to the next woman who has not rejected him yet. This initial change in the matching initiates a chain of proposals and rejections between the agents and terminates when either all women reject some man (the last one to propose), or his proposal is accepted by a single woman. The latter case results in a new stable matching. By applying this operation recursively to all new stable matchings one can find all the elements of the stable lattice \citep{mcvitie1971stable}

Let apply break-marriage to the fist man in the men-optimal matching $\mu^M$: $m_1$ is assigned free and propose to his next favorite woman in the list $w_4$. She prefers his proposal to her current partner and is matched to $m_1$ tentatively. Man $m_4$ proposes to $w_2$, who accepts the proposal since she is currently single. The new matching is $\mu_1$ and the current breakmarriage round will now assign $m_2$ in $\mu^M$ free and tries to identify a new stable matching. If breakmarriage is applied to all men in the matching, it will result in repetitive stable matchings, so a more efficient modification of the algorithm exists \cite{mcvitie1971stable}.

\section{Summary Statistics for Stable Lattice and Rotation Poset}\label{app:rotation}

\begin{figure}[t]
    \centering
    \includegraphics[width = \columnwidth]{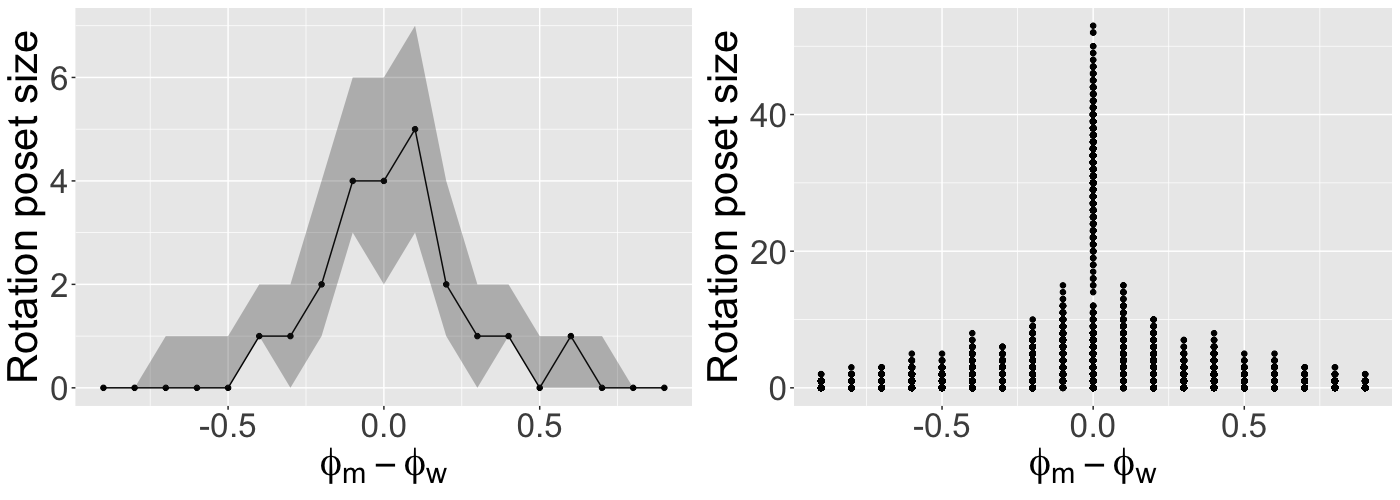}
    \caption{The size of the rotation poset with respect to $\phi_m - \phi_w$ in the Mallows model for $n= 150$: median values (left), maximum values (right). Dotted lines denote the medians, shadowed gray area denotes 1st and 3rd quartiles. Y-axes of left and right figures have different scales}
    \label{fig:rot_size}
\end{figure}

\begin{table}[t]
\centering 
\small
\begin{tabular}{rllrrl}
  \hline
 & n & Type & max($\mathcal{R}$) & Median $\mathcal{R}$ & CI($\mathcal{R}$) \\ 
  \hline
1 & 150 & Mallows &  11 & 4.00 & (1;8) \\ 
  3 & 150 & Uniform & 53 & 33.00 & (19;45) \\ 
  4 & 300 & Mallows & 18 & 8.00 & (3;13) \\ 
  5 & 300 & Uniform & 85 & 60.00 & (44;76) \\ 
   \hline
\end{tabular}
\caption{The size of the rotation poset ($\mathcal{R}$) under the Uniform, and the Mallows with $\phi_m=\phi_w=0.5$ models.
}
\label{table:rot_pos}
\end{table}

\begin{table*}[!htbp]
\centering
\begin{tabular}{rlrrrcrrrr}
   \hline
 & $\phi_{w}$ & Q2($\mathcal{L}$) & Q1($\mathcal{L}$) & Q3($\mathcal{L}$) & max($\mathcal{L}$) & max($\mathcal{R}$) & Q2($\mathcal{R}$) & Q1($\mathcal{R}$) & Q3($\mathcal{R}$) \\ 
  \hline
    & ~ & ~ & ~ & ~ & \bf{$\phi_m = 0.1$} &  ~  & ~ & ~ & ~  \\
  \hline
\textbf{1} &\textbf{0.1 }& \textbf{2.00 }& \textbf{1.00 }&\textbf{ 4.00} & \textbf{128} &   \textbf{7} & \textbf{1.00} & \textbf{0.00} &\textbf{ 2.00} \\ 
  2 & 0.3 & 4.00 & 2.00 & 8.00 & 128 &   7 & 2.00 & 1.00 & 3.00 \\ 
  3 & 0.5 & 2.00 & 2.00 & 4.00 &  32 &   5 & 1.00 & 1.00 & 2.00 \\ 
  4 & 0.7 & 2.00 & 1.00 & 2.00 &  32 &   5 & 1.00 & 0.00 & 1.00 \\ 
  5 & 0.9 & 1.00 & 1.00 & 1.00 &   8 &   3 & 0.00 & 0.00 & 0.00 \\ 
  6 & 1 & 1.00 & 1.00 & 1.00 &   4 &   2 & 0.00 & 0.00 & 0.00 \\ 
  \\ 
    \hline
    & ~ & ~ & ~ & ~ & \bf{$\phi_m = 0.3$} &  ~  & ~ & ~ & ~  \\
    \hline
  7 & 0.1 & 4.00 & 2.00 & 6.50 & 128 &   7 & 2.00 & 1.00 & 3.00 \\ 
  \textbf{8} &\textbf{0.3} & \textbf{8.00} & \textbf{4.00 }& \textbf{16.00 }& \textbf{2048} &  \textbf{11} & \textbf{3.00} & \textbf{2.00} &\textbf{ 4.00 }\\ 
  9 & 0.5 & 8.00 & 4.00 & 16.00 & 768 &  10 & 3.00 & 2.00 & 4.00 \\ 
  10 & 0.7 & 4.00 & 2.00 & 8.00 & 256 &   8 & 2.00 & 1.00 & 3.00 \\ 
  11 & 0.9 & 1.00 & 1.00 & 2.00 &  16 &   4 & 0.00 & 0.00 & 1.00 \\ 
  12 & 1 & 1.00 & 1.00 & 2.00 &   8 &   3 & 0.00 & 0.00 & 1.00 \\ 
  \\ 
    \hline
    & ~ & ~ & ~ & ~ & \bf{$\phi_m = 0.5$} &  ~  & ~ & ~ & ~  \\
    \hline
  13 & 0.1 & 2.00 & 1.00 & 4.00 &  64 &   6 & 1.00 & 0.00 & 2.00 \\ 
  14 & 0.3 & 8.00 & 4.00 & 16.00 & 1024 &  10 & 3.00 & 2.00 & 4.00 \\ 
  \textbf{15} & \textbf{0.5} & \textbf{16.00} & \textbf{6.00} & \textbf{32.00} & \textbf{1024} &  \textbf{11} & \textbf{4.00} & \textbf{3.00} &\textbf{ 5.00} \\ 
  16 & 0.7 & 8.00 & 4.00 & 16.00 & 512 &   9 & 3.00 & 2.00 & 4.00 \\ 
  17 & 0.9 & 2.00 & 1.00 & 4.00 &  96 &   7 & 1.00 & 0.00 & 2.00 \\ 
  18 & 1 & 1.00 & 1.00 & 2.00 &  12 &   5 & 0.00 & 0.00 & 1.00 \\ 
  \\ 
    \hline
    & ~ & ~ & ~ & ~ & \bf{$\phi_m = 0.7$} &  ~  & ~ & ~ & ~  \\
    \hline
  19 & 0.1 & 2.00 & 1.00 & 2.00 &  16 &   4 & 1.00 & 0.00 & 1.00 \\ 
  20 & 0.3 & 4.00 & 2.00 & 8.00 & 192 &   8 & 2.00 & 1.00 & 3.00 \\ 
  21 & 0.5 & 8.00 & 4.00 & 16.00 & 384 &  10 & 3.00 & 2.00 & 4.00 \\ 
  \textbf{22} & \textbf{0.7} & \textbf{12.00} & \textbf{4.00} & \textbf{32.00} & \textbf{2304} &  \textbf{12} & \textbf{4.00} & \textbf{2.00} & \textbf{5.00} \\ 
  23 & 0.9 & 4.00 & 2.00 & 8.00 & 192 &   9 & 2.00 & 1.00 & 3.00 \\ 
  24 & 1 & 2.00 & 1.00 & 4.00 &  14 &   6 & 1.00 & 0.00 & 2.00 \\ 
  \\ 
    \hline
    & ~ & ~ & ~ & ~ & \bf{$\phi_m = 0.9$} &  ~  & ~ & ~ & ~  \\
    \hline
  25 & 0.1 & 1.00 & 1.00 & 1.00 &   8 &   3 & 0.00 & 0.00 & 0.00 \\ 
  26 & 0.3 & 1.00 & 1.00 & 2.00 &  16 &   5 & 0.00 & 0.00 & 1.00 \\ 
  27 & 0.5 & 2.00 & 1.00 & 4.00 &  96 &   7 & 1.00 & 0.00 & 2.00 \\ 
  28 & 0.7 & 4.00 & 2.00 & 8.00 &  96 &   8 & 2.00 & 1.00 & 3.00 \\ 
 \textbf{29} & \textbf{0.9 }& \textbf{8.00} & \textbf{4.00} & \textbf{16.00} & \textbf{280} &  \textbf{14} & \textbf{4.00} & \textbf{2.00} & \textbf{5.00} \\ 
  30 & 1 & 8.00 & 4.00 & 12.00 & 112 &  15 & 4.00 & 3.00 & 6.00 \\ 
  \\ 
    \hline
    & ~ & ~ & ~ & ~ & \bf{$\phi_m = 1.0$(Uniform)} &  ~  & ~ & ~ & ~  \\
    \hline
  31 & 0.1 & 1.00 & 1.00 & 1.00 &   4 &   2 & 0.00 & 0.00 & 0.00 \\ 
  32 & 0.3 & 1.00 & 1.00 & 2.00 &   8 &   3 & 0.00 & 0.00 & 1.00 \\ 
  33 & 0.5 & 1.00 & 1.00 & 2.00 &   9 &   5 & 0.00 & 0.00 & 1.00 \\ 
  34 & 0.7 & 2.00 & 1.00 & 4.00 &  30 &   7 & 1.00 & 0.00 & 2.00 \\ 
  35 & 0.9 & 7.00 & 4.00 & 12.00 & 116 &  15 & 5.00 & 3.00 & 7.00 \\ 
  \textbf{36} & \textbf{1} & \textbf{82.00} &\textbf{ 61.00} & \textbf{112.00} & \textbf{794} &  \textbf{53} & \textbf{33.00} & \textbf{29.00} & \textbf{37.00 }\\ 
   \hline
\end{tabular}
\caption{Summary statistics for the size of the lattice ($\mathcal{L}$) and the rotation poset ($\mathcal{R}$) in Mallows model with different dispersion parameters for men ($\phi_m$) and women($\phi_w$). Q1, Q2, Q3 indicate 25th, 50th (median), 75th percentiles. }
\label{table:summary_phi}
\end{table*}

\subsubsection*{\textbf{Rotation Poset}}
    The rotation poset size follows the same pattern as the lattice size: it has the maximum size in case of zero correlation disparity and decreases with the increase in the disparity. The rotation poset is much smaller under the Mallows model (all cases) compared to the Uniform. For $n=150$, the median rotation poset size in the Uniform model ($33$) is substantially larger than in the Mallows model (maximum median equals $4$ rotations) (Table \ref{table:rot_pos}). For $n=300$, the difference is even more pronounced: $60$ rotations vs $8$. For a general case, if rotation poset $\mathcal{R}$ has $r$ rotations and height $h$, the maximum number of downsets in a poset, $d(P)$, (corresponding to maximum stable lattice size) can be counted as $d(P)=2^{r-h}(h+1)$ \citep{campo2018exponential}. We see an interesting configuration of the rotation posets associated with large lattices: a decreased height (the size of the maximal chain), and an increased width (the size of the largest anti-chain) \citep{campo2018exponential}. For example, for $n=150$ the maximum stable lattice under the Mallows model has $2304$ matchings, only $12$ corresponding rotations and rotation poset height equals $2$ (Table \ref{table:summary_phi}). Under the Uniform model the maximum stable lattice was smaller but had many more rotations: $626$ matchings and $53$ rotations (the rotation poset height was equal to $4$). 

\section{Fine-grained asymmetric correlation case} \label{app:fine-grained}
In the part of the graph, where women's preferences have a lower dispersion parameter $\phi_w<\phi_m$, the distribution of $S_W$-scores of a women-optimal matching lies above the distribution of its $S_M$-scores: the distributions have no overlap for  $\phi_w \leq 0.96$ (\cref{fig:M_W_0.99}).
\begin{figure}
    \centering
    \includegraphics[width=0.6\columnwidth]{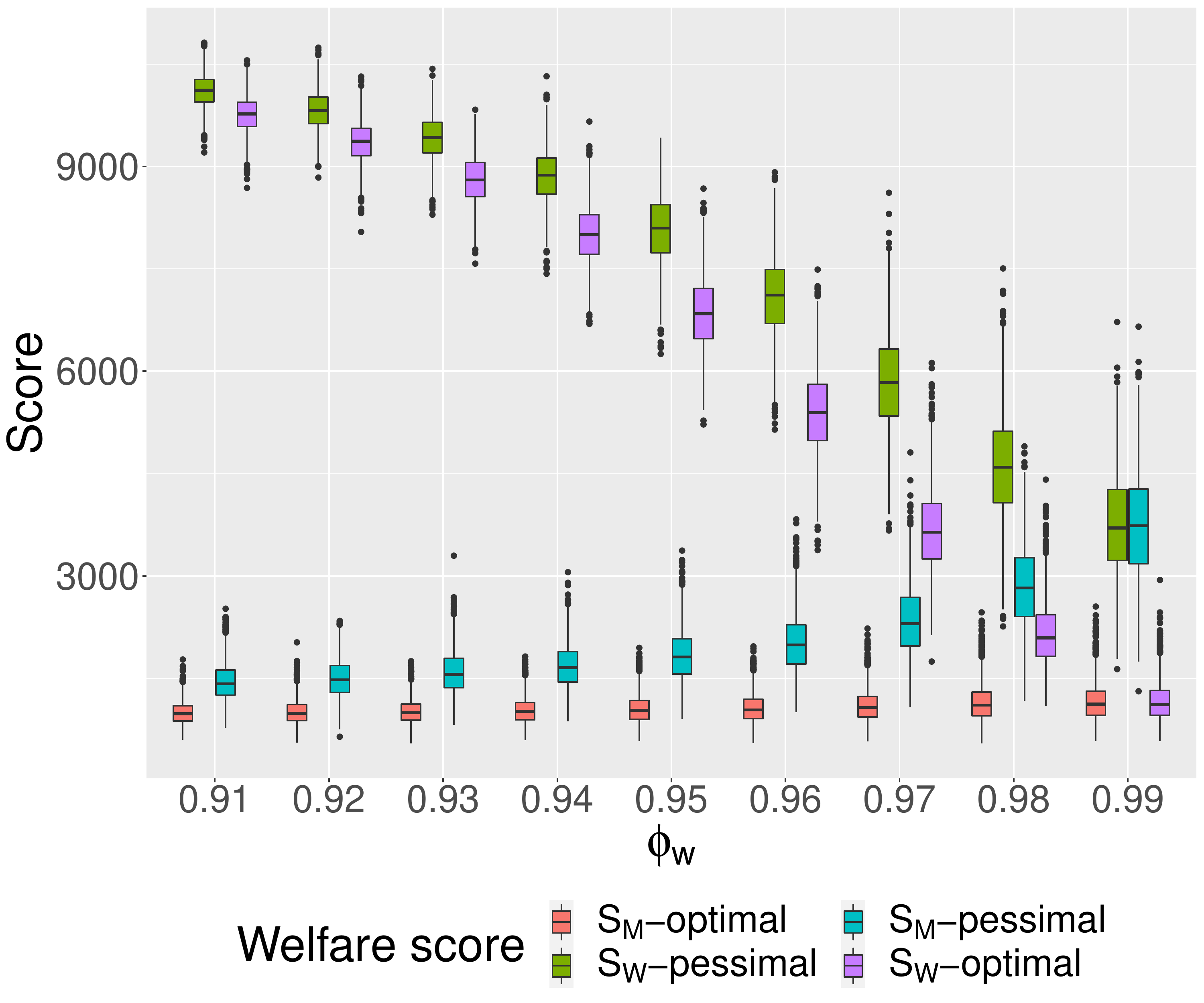}
    \caption{Illustrating $S_M$, $S_W$ scores under the Mallows model with $\phi_m = 0.99$ and varying $\phi_w$ based on 1000 samples.
    }
    \label{fig:M_W_0.99}
\end{figure}

\section{Egalitarian cost under the Mallows model} \label{sec:egal}
To assess the total welfare of \DA{} outcome in the presence of correlation, we looked into its egalitarian cost. We found that introducing correlation in preferences of any side translates into a high total dissatisfaction of all agents by the \DA{} outcome. 

Given a preference profile $\>$, the \emph{egalitarian cost} of a stable matching is the total ranking over all men and women \cite{irving1987efficient}. Formally,

\begin{equation}
c'(\mu) = S_M(\mu) + S_W(\mu)
\end{equation}

The smaller the egalitarian cost, the higher the welfare of all involved agents.

According to our experiments, \DA{} algorithm is a reasonable alternative to the exhaustive search algorithm for finding a sex-equal stable matching, when the preferences come from the Mallows model. Its output seems to be either a sex-equal solution (in asymmetric correlation markets) or a matching close to sex-equal one in terms of the cost (in symmetric correlation markets). Although \DA{} achieves such a high level of sex-equality fairness in the Mallows, the overall satisfaction of agents by its output remains low in experiments (\Cref{fig:Ega_heat}). The egalitarian cost is high for all $\phi$ combinations under the Mallows distribution: the smallest median score is $16636$ for $\phi_m = \phi_w = 0.9$, which for $n=150$ corresponds to having a partner with $55$ rank position on average. In contrast, the \DA{} applied to the uniform preferences have a relatively high total satisfaction, while having a great disparity between the welfare of men and women according to our experiments (\Cref{fig:cost_se}).  

\begin{figure}[h!]
    \centering
    \includegraphics[width = 0.8\columnwidth]{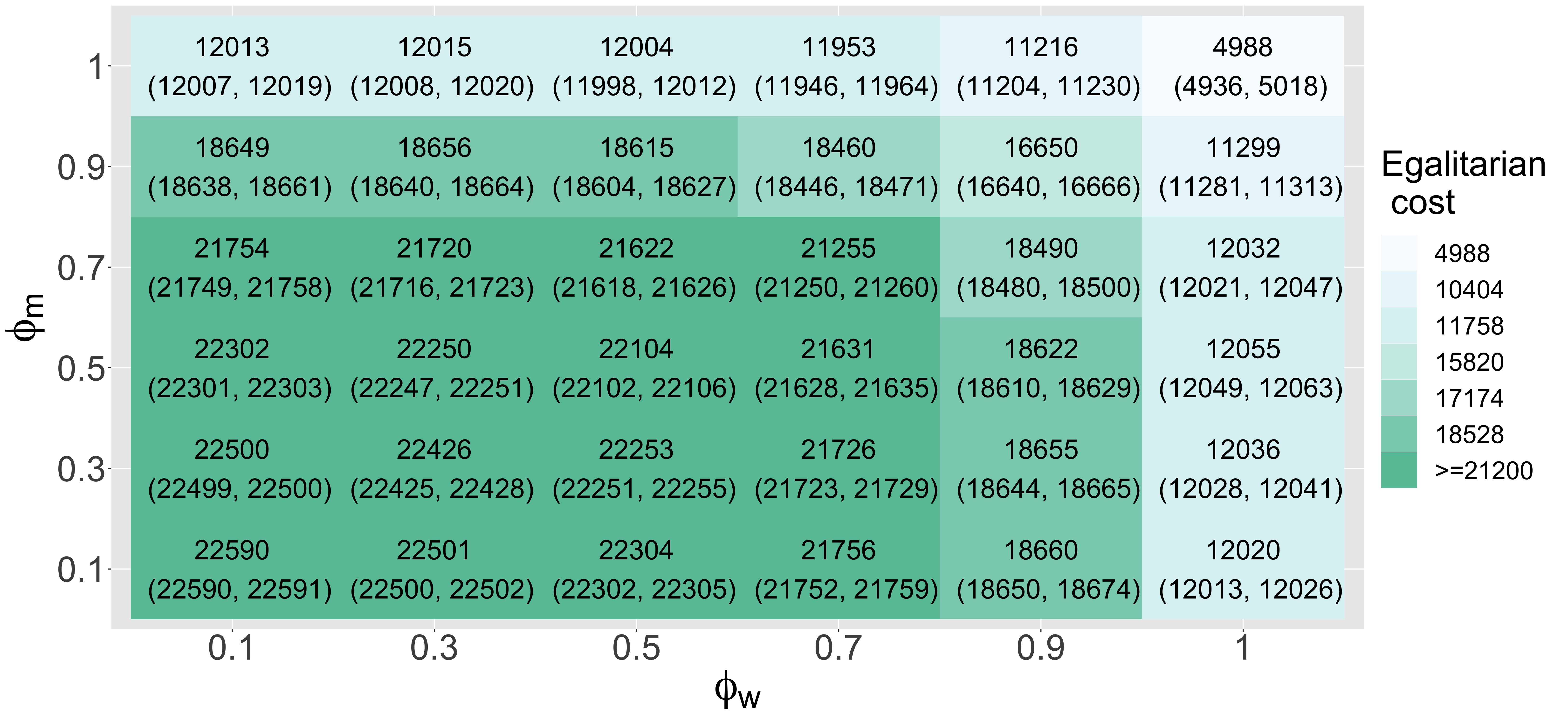}
    \caption{The egalitarian cost of a men-optimal matching under the Mallows model with various $\phi_m$, $\phi_w$ and $n=150$. Confidence intervals are given in parentheses.}
    \label{fig:Ega_heat}
\end{figure}

\section{Algorithm performance in the Uniform model}\label{app:alg_uni}
Under the Uniform model, the outcomes of the \DA{} algorithm have high sex-equality costs, as expected, since men- and women-optimal matchings have high sex-equality cost \cite{pittel1989average}. Other algorithms performed better: iBILS and PowerBalance showed the best performance and in most cases were able to find a sex-equal stable matching.
\begin{figure}[h!]
\centering
    \includegraphics[width = 0.8\columnwidth]{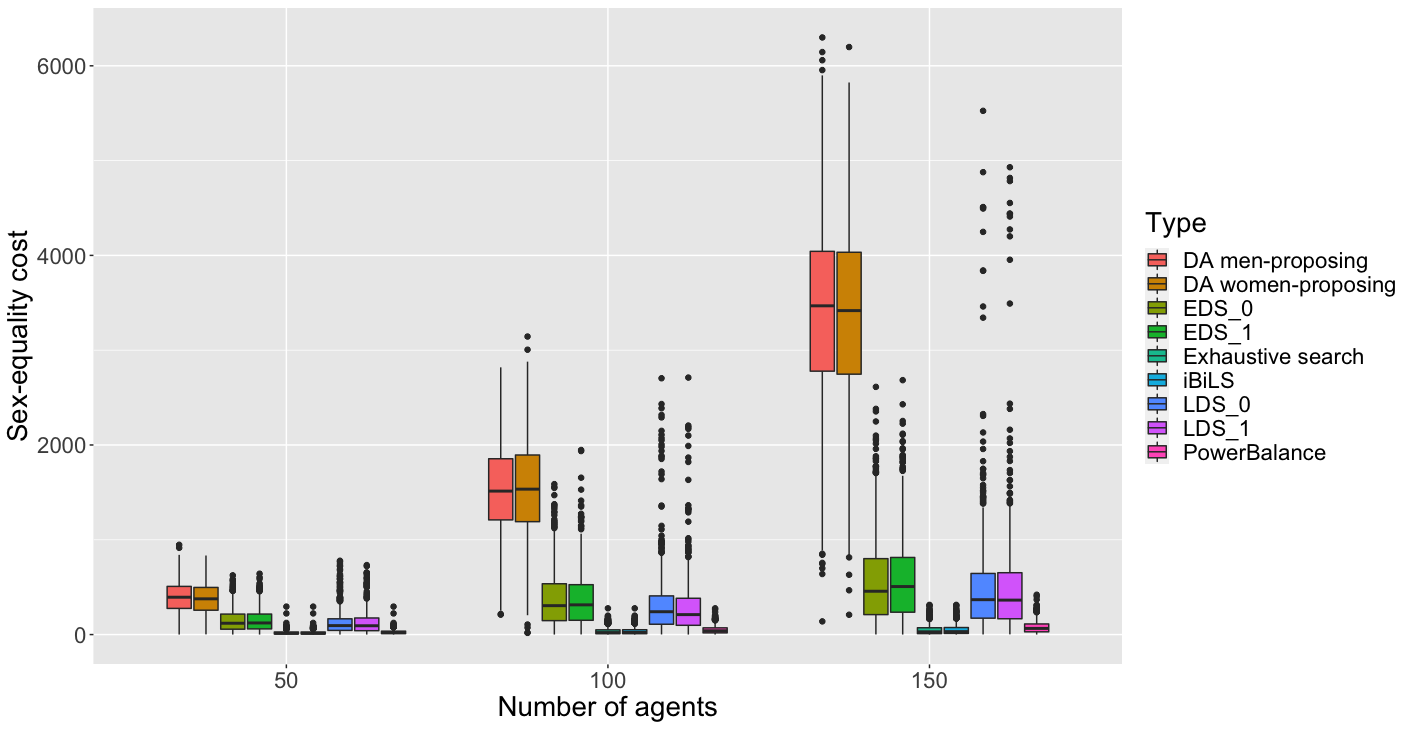}
    \captionof{figure}{Uniform model: comparing the sex-equality cost of various matching algorithms. Subscripts 0, indicate that men start proposing first, 1 that women.}
    \label{fig:comparison}
\end{figure}

\section{Fairness beyond the Mallows model}

\subsection{Other Correlated Models}
We also analyzed the fairness of other correlated preference models known in the literature. Namely, we analyze the known results for some probabilistic models with hard-tier sets as well as those that consider correlation coefficients on preferences. It is unclear whether these models can be thought of as Mallows models (or Mallows mixture). 
Yet, similar to the Mallows model correlation intensity between the preferences of two sides might be a reasonable indicator in predicting the sex-equality cost of stable matchings.

\subsubsection*{\textbf{Asymmetric correlation}} The \textit{Discrete Uniform model} from \cite{tziavelis20} is a probabilistic hard-tier preference model in which a subset of agents is always ranked higher than the rest of the set---referred to as \textit{hot} and \textit{cold} sets. Within a set, partners are sampled uniformly. According to experiments of \cite{tziavelis20} when men have uniform preferences on women, and women have discrete uniform preferences on men, the women-proposing \DA{} has one of the lowest sex-equality cost and drastically outperforms many procedurally fair algorithms. In this case, women's preferences are more correlated compared to men, so the women-optimal matching might become a sex-equal matching (or gives a good approximation of its sex-equality cost), coinciding with our findings for the Mallows model in asymmetric correlation markets.

\subsubsection*{\textbf{Symmetric correlation}} \citet{celik2007marriage} proposes a correlation coefficient of preferences $\rho$. The coefficient computes the sum of agents' average ranks in preferences of men (or women), normalized to take values between 0 (perfectly uncorrelated preferences) and 1 (identical preferences). The authors used the correlation coefficients of men $\rho_m$ and women $\rho_w$ as a predictor of men's and women's satisfaction by \DA{} outcome. To generate instances with different correlation intensities they sampled from a hard-tier model while varying the number of tiers. They showed that a higher correlation in preferences of women sharply reduces their satisfaction by a men-optimal matching while making the other side slightly better off. These insights are consistent with our observations (e.g. in \cref{fig:M_W_0.9} $S_W$-pessimal rises, and $S_M$-optimal moderately decreases when $\phi_w$ decreases). 
\newpage


\end{document}